\newcommand{\ignore}[1]{}
\newtheorem{theorem}{Theorem}
\newtheorem{lemma}{Lemma}
\renewcommand{\Pr}{{\bf Pr}}
\newcommand{\dist}{{\rm dist}}
\newcommand{\FF}{\mathbb{F}}
\newcommand{\labell}[1]{\label{#1}}
\newcommand{\cB}{{\cal B}}
\newcommand{\cM}{{\cal M}}
\newcommand{\cF}{{\cal F}}
\newcommand{\bfi}{{\boldsymbol i}}
\begin{document}

\title{Linear time Constructions of some \\ $d$-Restriction Problems}
\author{ Nader H. Bshouty
\\ Dept. of Computer Science\\ Technion, Haifa, 32000}

%\authorrunning{Short form of author list} % if too long for running head

%
\maketitle

\begin{abstract}
We give new linear time globally explicit
constructions of some $d$-restriction problems that follows
from the techniques used in \cite{A86,NSS95,PR11}.
\end{abstract}
\noindent {\bf Keywords:} Derandomization,
$d$-Restriction problems, Perfect hash, Cover-Free
families, Separating hash functions.

\section{Introduction}
A $d$-{\it restriction problem} \cite{NSS95,AMS06,B14} is a problem of the
following form:

\noindent {\bf Given} an alphabet $\Sigma$ of size $|\Sigma|=q$, an
integer $n$ and a class $\cM$ of nonzero functions $f:\Sigma^d\to
\{0,1\}$.

\noindent {\bf  Find} a small set $A\subseteq \Sigma^n$ such that: For every $1\le i_1< i_2<\cdots < i_d\le n$
and $f\in \cM$ there is $a\in A$ such that
$f(a_{i_1},\ldots,a_{i_d})\not=0$.

A $(1-\epsilon)$-{\it dense} $d$-{\it restriction problem} is a problem of the
following form:

\noindent {\bf Given} an alphabet $\Sigma$ of size $|\Sigma|=q$, an
integer $n$ and a class $\cM$ of nonzero functions $f:\Sigma^d\to
\{0,1\}$.

\noindent {\bf  Find} a small set $A\subseteq \Sigma^n$ such that: For every $1\le i_1< i_2<\cdots < i_d\le n$
and $f\in \cM$
$$\Pr_{a\in A}[f(a_{i_1},\ldots,a_{i_d})\not=0]> 1-\epsilon$$
where the probability is over the choice
of $a$ from the uniform distribution on $A$.

We give new constructions for the following three
($(1-\epsilon)$-dense) $d$-restriction problems: Perfect hash family, cover-free family and separating hash family.

A
construction is {\it global explicit} if it runs in deterministic polynomial time in the size
of the construction.
A {\it local explicit construction} is
a construction where one can find any bit in the construction
in time poly-log in the size of the construction.
The constructions in this paper are linear time global explicit constructions.

To the best of our knowledge, our constructions have sizes that are less than
the ones known from the literature.

\section{Old and New Results}\labell{Applep}

\subsection{Perfect Hash Family}\labell{PHp}\labell{Apple1p}

Let $H$ be a family of functions $h:[n]\to [q]$.
For $d\le q$ we say that $H$ is an $(n,q,d)$-{\it perfect hash family} ($(n,q,d)$-PHF)
\cite{AMS06} if for every
subset $S\subseteq [n]$ of size $|S|=d$ there is a {\it hash
function} $h\in H$ such that $h|_S$ is injective (one-to-one) on $S$, i.e.,
$|h(S)|=d$.

Blackburn and Wild \cite{BW98} gave an optimal explicit construction when
$$q\ge 2^{O\left(\sqrt{d\log d\log n}\right)}.$$ Stinson et al., \cite{SWZ}, gave an explicit construction of
$(n,q,d)$-PHF of size $d^{\log^* n}\log n$
for $q\ge d^2\log n/\log q$. It follows from the technique used in \cite{A86} with
Reed-Solomon codes that an explicit $(n,q,d)$-PHF of size $d^{2}\log n/\log q$
exist for $q\ge d^2\log n/\log q$.
In \cite{ABNNR,NSS95,AMS06} it was shown that there are
$(n,\Omega(d^2),d)$-PHF of size $O(d^6\log n)$ that can
be constructed in $poly(n)$ time. Wang and Xing \cite{WX01} used
algebraic function fields and gave an $(n,d^4,d)$-PHF of size $O((d^2/\log d)\log n)$ for infinite sequence of
integers $n$. Their construction is not linear time construction.
The above constructions are either for large~$q$ or are not linear time
constructions.

Bshouty in \cite{B14} shows that for a constant $c>1$, the following (third column in the table)
$(n,q,d)$-PHF can be
locally explicitly constructed in almost linear time (within $poly(\log)$)
{
\begin{center}
\begin{tabular}{|c|l|c|c|c|}
\hline
 &  & Linear time. & Upper & Lower\\
$n$ & $q$ & Size $=O()$ & Bound & Bound \\
\hline\hline
I.S. & $q\ge \frac{c}{4} d^4$ & $d^2\frac{\log n}{\log q}$ &$d\frac{\log n}{\log q}$& $d\frac{\log n}{\log q}$\\
\hline
all & $q\ge \frac{c}{4} d^4$ & $d^4\frac{\log n}{\log q}$ &$d\frac{\log n}{\log q}$& $d\frac{\log n}{\log q}$\\
\hline
I.S. & $ q\ge \frac{c}{2} d^2$ & $d^4\frac{\log n}{\log d}$ &$d\frac{\log n}{\log (2q/(d(d-1)))}$& $d\frac{\log n}{\log q}$\\
\hline
all & $ q\ge \frac{c}{2} d^2$ & $d^6\frac{\log n}{\log d}$ &$d\frac{\log n}{\log (2q/(d(d-1)))}$& $d\frac{\log n}{\log q}$\\
\hline
I.S. & $ q= \frac{d(d-1)}{2}+1+o(d^2)$ & $d^6\frac{\log n}{\log d}$ &$d\log n$& $d\frac{\log n}{\log q}$\\
\hline
all & $ q= \frac{d(d-1)}{2}+1+o(d^2)$ & $d^8\frac{\log n}{\log d}$ &$d\log n$& $d\frac{\log n}{\log q}$\\
\hline
\end{tabular}
\end{center}}

The upper bound in the table follows from union bound~\cite{B14}. The lower
bound is from~\cite{M82,Bl} (see also \cite{N94,FK84,K86,KM88,BW98,BT11,BESZ}). I.S. stands for ``true for infinite
sequence of integers $n$''.

Here we prove
\begin{theorem}\label{ThH1c}
Let $q$ be a power of prime. If $q>4(d(d-1)/2+1)$ then
there is a $(n,q,d)$-PHF
of size
$$O\left(\frac{d^2\log n}{\log(q/e(d(d-1)/2+1))}\right)$$
that can be constructed in linear time.

If $d(d-1)/2+2\le q\le 4(d(d-1)/2+1)$ then
there is a $(n,q,d)$-PHF
of size
$$O\left(\frac{q^2d^2\log n}{(q-d(d-1)/2-1)^2}\right)$$
that can be constructed in linear time.

In particular, for any constants $c>1$, $\delta>0$ and $0\le \eta<1$, the following $(n,q,d)$-PHF
can be constructed in linear time
(the third column in the following table)
{
\begin{center}
\begin{tabular}{|c|l|c|c|c|}
\hline
 &  & Linear time. & Upper & Lower\\
$n$ & $q$ & Size $=O()$ & Bound & Bound \\
\hline\hline
all & $q\ge  d^{2+\delta}$ & $d^2\frac{\log n}{\log q}$ &$d\frac{\log n}{\log q}$& $d\frac{\log n}{\log q}$\\
\hline
all & $ q\ge \frac{c}{2}d^2$ & $d^2{\log n}$ &$d\log n$& $d\frac{\log n}{\log q}$\\
\hline
all & $ q= \frac{d(d-1)}{2}+1+d^{2\eta}$ & $d^{6-4\eta}{\log n}$ &$d\log n$& $d\frac{\log n}{\log q}$\\
\hline
all & $ q= \frac{d(d-1)}{2}+2$ & $d^6\frac{\log n}{\log d}$ &$d\log n$& $d\frac{\log n}{\log q}$\\
\hline
\end{tabular}
\end{center}}
\end{theorem}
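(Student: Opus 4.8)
The plan is to build the $(n,q,d)$-PHF by concatenating an outer algebraic code with a small inner perfect hash family, exactly as in the technique of \cite{A86} refined by probabilistic arguments in the style of \cite{NSS95,PR11}. First I would take a Reed--Solomon code over $\FF_{q_0}$ where $q_0$ is a prime power chosen so that $q_0 \approx d^2\log n/\log q_0$; such a code of length $q_0$ and dimension $k = \Theta(\log n/\log q_0)$ gives $n \le q_0^{k}$ codewords, and any two of them agree in at most $k-1$ coordinates. Interpreting the $q_0$ coordinates as the $q_0$ "hash functions" and the codeword symbols as values, the standard counting argument shows that for any $d$-subset $S$, a random coordinate is injective on $S$ with probability at least $1 - \binom{d}{2}(k-1)/q_0 > 1 - \binom{d}{2}/q_0 \cdot (k-1)$; choosing $q_0 \ge $ roughly $e(\binom{d}{2}+1)$ times the "slack" makes this a constant bounded away from $0$, and an averaging/union bound over the $\binom{n}{d}$ subsets is replaced by the explicit code distance so that we get a genuinely $\binom{d}{2}$-wise type guarantee rather than needing $\log\binom{n}{d}$ repetitions. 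The point is that the alphabet of this outer object is $[q_0]$, not $[q]$, so the second step is to compose with an inner $(q_0,q,d)$-PHF of constant size (constant because $q_0$ is $\mathrm{poly}(d)$), which exists and is found by brute force in time depending only on $d$ — hence free in the linear-time accounting.

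Next I would carry out the two regimes separately. In the regime $q > 4(\binom{d}{2}+1)$, the inner PHF is essentially trivial: one can take $q_0 = \Theta(q)$ (the next prime power above a suitable multiple of $\binom{d}{2}+1$, available by Bertrand), so that a single identity-like inner map already works, and the size is dominated by the number of outer coordinates, namely $q_0 \cdot (\text{number of RS blocks needed to reach length } n)$. A careful bookkeeping of the dimension $k$ and the requirement that the agreement $k-1$ be a small fraction of $q_0/(\binom{d}{2}+1)$ yields the stated $O\!\left(d^2\log n/\log(q/(e(\binom{d}{2}+1)))\right)$: the $\log$ in the denominator is exactly $\log(q_0/(\binom{d}{2}+1))$, i.e. how much bigger $q_0$ is than the "threshold" $\binom{d}{2}+1$, and $e$ appears from the standard $\binom{q_0}{d}$-vs-$(q_0)^d/d!$ Stirling-type slack. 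In the narrow regime $\binom{d}{2}+2 \le q \le 4(\binom{d}{2}+1)$ the outer alphabet cannot exceed roughly $q$, the "slack" $q - \binom{d}{2} - 1$ may be tiny, and one must iterate (or tensor) the construction a number of times that scales like $(q/(q-\binom{d}{2}-1))$ to amplify the per-coordinate success probability up to a constant; squaring this blow-up (two coordinates must jointly be good, or equivalently the composition costs a factor for each of the two "dimensions" of a pair collision) gives the $O\!\left(q^2 d^2\log n/(q-\binom{d}{2}-1)^2\right)$ bound.

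Finally, the table entries are just specializations: for $q \ge d^{2+\delta}$ we have $\log(q/(e\binom{d}{2}+1)) = \Theta(\log q)$, giving size $O(d^2\log n/\log q)$; for $q \ge (c/2)d^2$ the denominator is $\Theta(1)$ (a constant depending on $c$), so the first formula gives $O(d^2\log n)$; for $q = \binom{d}{2}+1+d^{2\eta}$ the slack is $d^{2\eta}$ and the second formula gives $O(q^2 d^2\log n/d^{4\eta}) = O(d^{4}\cdot d^2\log n/d^{4\eta}) = O(d^{6-4\eta}\log n)$ using $q=\Theta(d^2)$; and for $q = \binom{d}{2}+2$ the slack is $1$, so naively the second bound is $O(d^6\log n)$, but here one squeezes out an extra $\log d$ by replacing the final amplification step with a second, recursive application of the whole construction (the $d^{\log^* n}$-style trick of \cite{SWZ} is not needed; a single recursion onto alphabet $\Theta(d^2)$ suffices), yielding $O(d^6\log n/\log d)$. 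I expect the main obstacle to be the bookkeeping in the narrow regime: making the iterated amplification genuinely linear time (not just polynomial) requires that each stage's construction be read off from an algebraic object rather than found by search, and tracking the constants so that the exponent of $(q-\binom{d}{2}-1)$ in the denominator comes out as exactly $2$ rather than something larger demands that the two "collision dimensions" be decoupled cleanly — this is where the argument of \cite{PR11} on separating/distributing the pair constraints has to be invoked with care.
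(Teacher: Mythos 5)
Your plan is the classical code-concatenation route (outer Reed--Solomon plus a small inner PHF), and it breaks at the very first composition step. If the outer RS code over $\FF_{q_0}$ has dimension $k=\Theta(\log n/\log q_0)$, then for a $d$-set the fraction of non-injective coordinates is only bounded by $\binom{d}{2}(k-1)/m\le\binom{d}{2}(k-1)/q_0$, so to get any injective coordinate at all you are forced to take $q_0=\Omega\bigl(d^2\log n/\log q_0\bigr)$. Hence $q_0$ is \emph{not} $\mathrm{poly}(d)$: it grows with $n$, the inner $(q_0,q,d)$-PHF is not of constant size and cannot be found ``by brute force in time depending only on $d$,'' and its size is itself $\Omega(d\log q_0/\log q)$, so the product of outer and inner sizes lands at the older bounds of the type $d^4\log n/\log q$ (exactly the bounds from \cite{A86,NSS95,B14} quoted in the introduction), not at the claimed $O\bigl(d^2\log n/\log(q/e(d(d-1)/2+1))\bigr)$. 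Escaping this either reintroduces the $\log^* n$ recursion of \cite{SWZ} or extra $\mathrm{poly}(d)$ factors, and your narrow-regime argument (``iterate/tensor and square the blow-up'') is not a construction at all; nothing in it produces the exponent $2$ on $q-d(d-1)/2-1$, nor linear-time constructibility.

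The paper's proof avoids concatenation entirely. It observes that an $(n,q,d)$-PHF is precisely a hitting set for the polynomials $\Delta_S=\prod_{k<j}(x_{i_k}-x_{i_j})$, which lie in ${\cal H}_{d(d-1)/2+1}$, and builds such a hitting set directly over $\FF_q$ (Lemma~\ref{hB}): take $n$ nonzero codewords $c^{(1)},\dots,c^{(n)}$ of an $[m,\lceil\log(n+1)/\log q\rceil,(1-\tfrac1{D+1})m]_q$ code with $D=d(d-1)/2+1$, constructed in time $O(mn)$ by the Porat--Rothschild derandomized Gilbert--Varshamov construction (Lemmas~\ref{codeo}, \ref{code}), and use the $m$ coordinate-rows as assignments. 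Each factor $x_{i_1}-x_{i_2}$ evaluates along these assignments to the nonzero codeword $c^{(i_1)}-c^{(i_2)}$, hence vanishes on at most $m/(D+1)$ assignments, so the product of at most $D$ factors misses at most $Dm/(D+1)<m$ of them. The two regimes in the statement come not from any amplification but from two lower bounds on the GV rate $1-H_q(1-1/h)$: roughly $(\ln q-\ln h-1)/(h\ln q)$ for $q>4h$ (Lemma~\ref{spcode}, giving the $\log(q/e(d(d-1)/2+1))$ denominator) and $(q-h)^2/(4(q-1)^2h\ln q)$ for $q=O(h)$ (Lemma~\ref{spcode2}, giving the $(q-d(d-1)/2-1)^2$ denominator); the table rows are direct substitutions. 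This single-shot, code-as-hitting-set argument is the missing idea in your proposal, and it is what makes both the sizes and the linear running time come out as stated.
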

Notice that for $q>cd^2/2$, $c>1$ the sizes
in the above theorem is within a factor of $d$ of the lower bound.
Constructing almost optimal (within $poly(d)$) $(n,q,d)$-PHF
for $q= o(d^2)$ is still a challenging open problem.
Some nearly optimal constructions of $(n,q,d)$-PHF for $q= o(d^2)$ are given in~\cite{NSS95,M03}.

The $(n,q,d)$-perfect hash families for $d\le 6$ are studied in
\cite{AMSW96,BW98,Bl,SWZ,M03,BJQ04,BJ08,MT08}. In this paper we prove
\begin{theorem} If $q$ is prime power and $d\le \log n/(8\log\log n)$ then there is a linear time
construction of $(n,q,d)$-PHF of size
$$O\left(\frac{d^3 \log n}{g(q,d)}\right)$$
where
$$g(q,d)=\left(1-\frac{1}{q}\right)\left(1-\frac{2}{q}\right)\cdots \left(1-\frac{d-1}{q}\right).$$
\end{theorem}
Using the lower bound in \cite{FK84} we show that the size
in the above theorem is within a factor of $d^4$ of the lower bound when
$q=d+O(1)$ and within a factor of $d^3$ for $q>cd$ for some $c>1$.

\subsection{Dense Perfect Hash Family}

We say that $H$ is an $(1-\epsilon)$-{\it dense} $(n,q,d)$-PHF
if for every
subset $S\subseteq [n]$ of size $|S|=d$ there are at least
$(1-\epsilon)|H|$ hash
functions $h\in H$ such that $h|_S$ is injective on $S$.

We prove
\begin{theorem}~\label{Den} Let $q$ be a power of prime. If $\epsilon>4(d(d-1)/2+1)/q$ then
there is a $(1-\epsilon)$-dense $(n,q,d)$-PHF
of size
$$O\left(\frac{d^2\log n}{\epsilon\log(\epsilon q/e(d(d-1)/2+1))}\right)$$
that can be constructed in linear time.

If $(d(d-1)/2+1)/(q-1)\le \epsilon \le 4(d(d-1)/2+1)/q$ then
there is a $(1-\epsilon)$-dense $(n,q,d)$-PHF
of size
$$O\left(\frac{q^2d^2\log n}{\epsilon(q-(d(d-1)/2+1)/\epsilon)^2}\right)$$
that can be constructed in linear time.
\end{theorem}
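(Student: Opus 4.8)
\noindent My plan is to re-run the construction underlying Theorem~\ref{ThH1c}, replacing its union bound by a Chernoff bound, so that every $d$-subset is separated not merely by one member of the family but by a $(1-\epsilon)$-fraction of it. Concretely, I would start from the base distribution behind Theorem~\ref{ThH1c}: embed $[n]$ as distinct elements $\alpha_1,\dots,\alpha_n$ of $\mathbb{F}_{q^r}$ with $r=\lceil\log n/\log q\rceil$, fix a surjective $\mathbb{F}_q$-linear map $L:\mathbb{F}_{q^r}\to\mathbb{F}_q$, and take the random hash function $h_P(i)=L(P(\alpha_i))$ with $P\in\mathbb{F}_{q^r}[x]$ uniform of degree less than some fixed $k\ge 2$. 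For a fixed $d$-set $S$, a collision $h_P(i)=h_P(j)$ forces $L\big(P(\alpha_i)-P(\alpha_j)\big)=0$; since $P(\alpha_i)-P(\alpha_j)$ is uniform on $\mathbb{F}_{q^r}$ and $L$ carries a uniform element to $0$ with probability $1/q$, a union bound over the $\binom{d}{2}$ pairs yields
$$\Pr\big[\,h_P|_S\text{ is not injective}\,\big]\ \le\ \rho\ :=\ \frac{d(d-1)/2+1}{q}$$
(the $+1$ is harmless slack, included only for uniformity with the paper's other statements).

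\noindent Next I would take $m$ independent hash functions $h_1,\dots,h_m$ from this distribution and, for a fixed $d$-set $S$, write $N_S$ for the number of indices $t$ with $h_t|_S$ not injective --- a sum of $m$ i.i.d.\ indicators of mean at most $\rho m$. The goal is to have $N_S\le\epsilon m$ for all $\binom{n}{d}$ sets $S$ simultaneously. When $\epsilon>4\rho$ I would use the multiplicative Chernoff bound $\Pr[N_S>\epsilon m]\le(e\rho/\epsilon)^{\epsilon m}$, so that $m=O\big(d\log n/(\epsilon\log(\epsilon/(e\rho)))\big)$ suffices for a union bound over the $S$; since $\log(\epsilon/(e\rho))=\log\big(\epsilon q/(e(d(d-1)/2+1))\big)$, this is the first size in the statement up to the derandomization factor discussed below. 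When $\rho\le\epsilon\le4\rho$ I would instead use a Bernstein-type bound $\Pr[N_S>\epsilon m]\le\exp\big(-c(\epsilon-\rho)^2m/\rho\big)$, so that $m=O\big(\rho\,d\log n/(\epsilon-\rho)^2\big)$ suffices; here $\rho=\Theta(\epsilon)$, and the identity $q-(d(d-1)/2+1)/\epsilon=q(\epsilon-\rho)/\epsilon$ turns this into the second size in the statement.

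\noindent Finally, to make the construction run in linear time, I would replace the random sample by a deterministic one obtained through the method of conditional expectations with a pessimistic estimator, exactly in the spirit of \cite{PR11} and of the proof of Theorem~\ref{ThH1c}: build the $m\times n$ array one column at a time, keeping a cheaply updatable upper bound on the probability that, over the remaining random choices, some $S$ finishes with $N_S>\epsilon m$; by the previous paragraph this bound starts below $1$, and at each step some symbol choice keeps it from increasing. The resulting $m$ exceeds the probabilistic value by only an $O(d)$ factor, which is why the sizes above carry $d^2$ rather than $d$; rounding the alphabet to a prime power where that is needed and the embedding $[n]\hookrightarrow\mathbb{F}_{q^r}$ are routine. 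I expect this last step to be the main obstacle: the naive estimator is a sum over the $\binom{n}{d}$ sets $S$, which is super-polynomial, so it must be re-expressed in a factored form whose conditional expectations can be evaluated in linear total time while the extra slack stays within the $O(d)$ loss. That amortized bookkeeping --- rather than the probability estimates --- is the technical heart, and it is precisely what the techniques of \cite{A86,NSS95,PR11} supply.
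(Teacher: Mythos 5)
Your probabilistic calculation (pairwise collision probability $\le (d(d-1)/2+1)/q$, then Chernoff for $\epsilon>4\rho$ and a Bernstein-type bound for $\rho\le\epsilon\le 4\rho$, then a union bound over $\binom{n}{d}$ sets) is fine as an existence argument, and your algebra converting $(\epsilon-\rho)$ into $q-(d(d-1)/2+1)/\epsilon$ is correct. But the theorem is about a \emph{deterministic linear-time} construction, and that is exactly the step you leave open. You yourself note that the natural pessimistic estimator is a sum over $\binom{n}{d}$ sets and ``must be re-expressed in a factored form whose conditional expectations can be evaluated in linear total time,'' and you then assert, without argument, that the techniques of \cite{A86,NSS95,PR11} supply this and that the loss is only a factor $O(d)$. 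Neither claim is justified. The conditional-expectation argument of \cite{PR11} works because its bad events are indexed by the (normalized) codewords of an $[m,k]_q$ code, of which there are only $O(q^{k-1})=O(n)$; it does not give a polynomially evaluable estimator for $\binom{n}{d}$ simultaneous Chernoff tail events, and no factorization of that estimator is exhibited in your plan. There is also an internal mismatch: your tail bounds are for $m$ i.i.d.\ draws of hash functions $h_P$, while your derandomization builds ``the $m\times n$ array one column at a time,'' which abandons the polynomial structure on which the per-pair collision bound rested. Finally, the claim that the derandomization costs exactly a factor $d$ (explaining $d^2$ versus $d$ in the stated size) is an unsupported guess.

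The paper avoids this obstacle entirely by a different reduction. A $d$-set is separated by an assignment exactly when the degree-$(d(d-1)/2)$ polynomial $\prod_{k<j}(x_{i_k}-x_{i_j})$ is nonzero there, so a $(1-\epsilon)$-dense $(n,q,d)$-PHF is the same as a $(1-\epsilon)$-dense hitting set for ${\cal H}_{d(d-1)/2+1}$ (Lemma~\ref{hBD}, used as in Theorem~\ref{ThH1}). That hitting set is obtained by evaluating $n$ distinct nonzero codewords of an explicit $[m,\lceil\log(n+1)/\log q\rceil,(1-\epsilon/(D+1))m]_q$ code with $D=d(d-1)/2+1$: each linear factor $x_{i_1}-x_{i_2}$ evaluates to a nonzero codeword and hence vanishes on at most $\epsilon m/(D+1)$ of the $m$ assignments, so each product of at most $D$ factors vanishes on fewer than $\epsilon m$ of them --- a deterministic counting argument, with no Chernoff bound and no union bound over $\binom{n}{d}$ sets. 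The only derandomization needed is the linear-time Gilbert--Varshamov-type code construction (Lemmas~\ref{code}, \ref{spcode}, \ref{spcode2}, following \cite{PR11}), where the estimator ranges over codewords only; and the factor $d^2$ in the size comes from the degree $D=\Theta(d^2)$, not from derandomization slack. As it stands, your proposal proves existence but not the constructive statement, and the missing derandomization is the heart of the theorem.
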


We also prove (what we believe) two folklore results
that show that the bounds on the size and $\epsilon$ in the above theorem are almost tight.
First, we show that the size  of any $(1-\epsilon)$-dense $(n,q,d)$-PHF is
$$\Omega\left(\frac{d\log n}{\epsilon \log q}\right).$$
Second, we show that no $(1-\epsilon)$-dense $(n,q,d)$-PHF
exists when $\epsilon<d(d-1)/(2q)+O((d^2/q)^2)$.

Notice that for $q\ge (d/\epsilon)^{1+c}$, where $c>1$ is any constant, the
size of the construction in Theorem~\ref{Den},
$$O\left(\frac{d^2\log n}{\epsilon \log q}\right),$$ is within a factor $d$ of the
lower bound. Also the bound on $\epsilon$ is asymptotically tight.

For the rest of this section we will only state the results
for the non-dense $d$-restriction problems. Results similar to
Theorem~\ref{Den} can be easily obtained using the same technique.

\subsection{Cover-Free Families}\labell{CFF}\labell{Apple3}

Let $X$ be a set with $N$ elements and let $\cB$ be a set of subsets (blocks) of $X$.
We say that $(X,\cB)$ is $(w,r)$-{\it cover-free family}
($(w,r)$-CFF),~\cite{KS64}, if for any~$w$ blocks $B_1,\ldots,B_w\in \cB$ and any other $r$ blocks $A_1,\ldots,A_r\in \cB$, we have
$$\bigcap_{i=1}^w B_i\not\subseteq \bigcup_{j=1}^r A_j.$$
Let $N((w,r),n)$ denotes the minimum number of points in any $(w,r)$-CFF having $n$ blocks.
Here we will study CFF when $w=o(r)$ (or $r=o(w)$). We will write $(n,(w,r))$-CFF
when we want to emphasize the number of blocks.

When $w=1$, the problem is called {\it group testing}.
The problem of group testing which was first presented during
World War II was presented as follows~\cite{DH00,ND00}: Among $n$ soldiers, at most
$r$ carry a fatal virus. We would like to blood test the soldiers
to detect the infected ones. Testing each one separately will give
$n$ tests. To minimize the number of tests we can mix the blood of
several soldiers and test the mixture. If the test comes negative
then none of the tested soldiers are infected. If the test comes
out positive, we know that at least one of them is infected. The
problem is to come up with a small number of tests.

This problem is equivalent to $(n,(1,r))$-CFF and is equivalent to finding a small set $\cF\subseteq\{0,1\}^n$ such that for every $1\le i_1<i_2<\cdots<i_d\le
n$, $d=r+1$, and every $1\le j\le d$ there is $a\in \cF$ such that $a_{i_k}=0$
for all $k\not=j$ and $a_{i_j}=1$.

Group testing has the following lower bound \cite{DR82,DR89,F96}
\begin{eqnarray}\labell{CFFlower}
N((1,r),n)\ge\Omega\left( \frac{r^2}{\log r}\log n\right).
\end{eqnarray}
It is known that a group testing of size $O(r^2\log
n)$ can be constructed in linear time \cite{DH00,PR11,INR10}.

An $(n,(w,r))$-CFF
can be regarded as a set $\cF\subseteq \{0,1\}^n$ such that for every $1\le i_1<i_2<\cdots<i_{d}\le n$ where $d=w+r$ and
every $J\subset [d]$ of size $|J|=w$ there is $a\in \cF$
such that $a_{i_k}=0$ for all $k\not\in J$ and $a_{i_j}=1$ for all
$j\in J$. Then $N((w,r),n)$ is the minimum size of such $\cF$.

It is known that, \cite{SWZ00},
$$N((w,r),n)\ge \Omega\left(\frac{d{d\choose w}}{\log{d\choose w}}\log n\right).$$

Using union bound it is easy to show
\begin{lemma} For $d=w+r=o(n)$ we have
$$N((w,r),n)\le O\left(\sqrt{{wrd}}\cdot {d\choose w}\log n\right).$$
\end{lemma}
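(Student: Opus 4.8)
The plan is a union-bound (probabilistic) argument of the same flavour as the ones behind the union-bound upper bounds quoted earlier in the paper; the two non-routine ingredients are the right choice of sampling bias and a sharp form of Stirling's formula at the end.

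First I would fix a bias $p\in(0,1)$ (to be chosen later), put $d=w+r$, and build $\cF=\{a^{(1)},\dots,a^{(t)}\}\subseteq\{0,1\}^n$ by drawing $t$ strings independently, each coordinate of each string equal to $1$ with probability $p$ and $0$ with probability $1-p$, all choices mutually independent. Recall that $\cF$ is an $(n,(w,r))$-CFF iff for every $1\le i_1<\cdots<i_d\le n$ and every $J\subseteq[d]$ with $|J|=w$ there is $a\in\cF$ with $a_{i_j}=1$ for $j\in J$ and $a_{i_k}=0$ for $k\in[d]\setminus J$; call such an $a$ a \emph{witness} for the pair $\big((i_1,\dots,i_d),J\big)$. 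A single random string is a witness for a fixed such pair with probability exactly $\mu:=p^w(1-p)^r$, independently of everything else, so the probability that none of the $t$ strings is a witness for that pair is $(1-\mu)^t\le e^{-\mu t}$.

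Next I would take a union bound over the $\binom{n}{d}$ index tuples and the $\binom{d}{w}$ choices of $J$, getting
$$\Pr[\cF\text{ is not an }(n,(w,r))\text{-CFF}]\ \le\ \binom{n}{d}\binom{d}{w}e^{-\mu t}.$$
Choosing $t=\lceil\mu^{-1}\ln\!\big(\binom{n}{d}\binom{d}{w}\big)\rceil+1$ makes the right-hand side $<1$, so such an $\cF$ exists and $N((w,r),n)\le t$. Since $\binom{n}{d}\le n^d$ and $\binom{d}{w}\le 2^d$, and $d=o(n)$, we have $\ln\!\big(\binom{n}{d}\binom{d}{w}\big)=O(d\log n)$; so everything reduces to bounding $\mu^{-1}$ well.

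For that I would take $p=w/d$, giving $\mu=(w/d)^w(r/d)^r=w^wr^r/d^d$, hence $\mu^{-1}=d^d/(w^wr^r)$. The point — and the step I expect to be the main obstacle — is to show that $d^d/(w^wr^r)$ is only a factor $O(\sqrt{wr/d})$ larger than $\binom{d}{w}$, rather than the factor $d+1$ coming from the trivial bound $d^d/(w^wr^r)\le(d+1)\binom{d}{w}$ (which would yield only $O(d^2\binom{d}{w}\log n)$). This follows from the explicit Stirling bounds $\sqrt{2\pi m}\,(m/e)^m\le m!\le \sqrt{2\pi m}\,(m/e)^m e^{1/(12m)}$ applied with $m=d,w,r$: writing $\binom{d}{w}=d!/(w!\,r!)$ and using $w,r\ge1$ to absorb the $e^{1/(12m)}$ error factors into an absolute constant gives $\binom{d}{w}\ge c\,\sqrt{d/(wr)}\cdot d^d/(w^wr^r)$ for some $c>0$, i.e. $\mu^{-1}=O\!\big(\sqrt{wr/d}\,\binom{d}{w}\big)$. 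Combining this with the union-bound step,
$$N((w,r),n)\ \le\ t\ =\ O\!\left(d\log n\cdot \sqrt{\tfrac{wr}{d}}\,\binom{d}{w}\right)\ =\ O\!\left(\sqrt{wrd}\,\binom{d}{w}\log n\right),$$
as claimed. (The lemma only asserts existence, so I would stop here; the linear-time realization of such families is treated separately.)
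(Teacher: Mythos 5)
Your proof is correct and follows exactly the route the paper intends: the lemma is stated there with only the remark ``using union bound,'' and your argument --- biased random strings with $p=w/d$, a union bound over the $\binom{n}{d}\binom{d}{w}$ constraints, and the Stirling refinement showing $d^d/(w^wr^r)=O\left(\sqrt{wr/d}\,\binom{d}{w}\right)$ --- is the standard way to obtain the stated $O\left(\sqrt{wrd}\cdot\binom{d}{w}\log n\right)$ bound rather than the weaker $O\left(d^2\binom{d}{w}\log n\right)$. All steps check out, so nothing further is needed.
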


It follows from \cite{SWZ}, that for infinite sequence of integers $n$, an $(n,(w,r))$-CFF of size
$$M=O\left((wr)^{\log^* n}\log n\right)$$ can be constructed in polynomial time.
For constant $d$, the $(n,d)$-universal set over $\Sigma=\{0,1\}$ constructed in \cite{NN93} of size $M=O(2^{3d}\log n)$
(and in \cite{NSS95} of size $M=2^{d+O(\log^2d)}\log n$) is $(n,(w,r))$-CFF for any $w$ and $r$ of size $O(\log n)$. See also \cite{LS06}.
In \cite{B14}, Bshouty gave the following locally explicit constructions
of $(n,(w,r))$-CFF that can be constructed in (almost) linear time in their sizes (the third column in the table).
\begin{center}
\begin{tabular}{|c|c|c|c|c|}
\hline
 &  & Linear time  & Upper & Lower \\

$n$ & $w$ & Size= & Bound & Bound \\
\hline\hline
I.S & $O(1)$ & $\frac{r^{w+2}}{\log r}\log n$ & $r^{w+1}\log n$ & $\frac{r^{w+1}}{\log r}\log n$\\
\hline
all & $O(1)$ & $\frac{r^{w+3}}{\log r}\log n$ & $r^{w+1}\log n$ & $\frac{r^{w+1}}{\log r}\log n$\\
\hline
I.S. & $o(r)$ & $\frac{w^2(ce)^wr^{w+2}}{\log r}\log n$ & $\frac{r^{w+1}}{(w/e)^{w-1/2}}\log n$ & $\frac{r^{w+1}}{(w/e)^{w+1}\log r}\log n$\\
\hline
all & $o(r)$ & $\frac{w^3(ce)^wr^{w+3}}{\log r}\log n$ & $\frac{r^{w+1}}{(w/e)^{w-1/2}}\log n$ & $\frac{r^{w+1}}{(w/e)^{w+1}\log r}\log n$\\
\hline
\end{tabular}
\end{center}
In the table, $c>1$ is any constant.
We also added to the table the non-constructive upper bound in the forth column
and the lower bound in the fifth column.

In this paper we prove
\begin{theorem} \labell{CFFrp} For any constant $c>1$, the following $(n,(w,r))$-CFF can be constructed in linear time in their sizes
\begin{center}
\begin{tabular}{|c|c|c|c|c|}
\hline
 &  & Linear time.  & Upper & Lower \\

$n$ & $w$ & Size=$O(\ )$ & Bound & Bound \\
\hline\hline
all & $O(1)$ & ${r^{w+1}}\log n$ & $r^{w+1}\log n$ & $\frac{r^{w+1}}{\log r}\log n$\\
\hline
all & $o(r)$ & ${(ce)^wr^{w+1}}\log n$ & $\frac{r^{w+1}}{(w/e)^{w-1/2}}\log n$ & $\frac{r^{w+1}}{(w/e)^{w+1}\log r}\log n$\\
\hline
\end{tabular}
\end{center}
\end{theorem}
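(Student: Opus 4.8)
The plan is to treat an $(n,(w,r))$-CFF as a $d$-restriction problem with $d=w+r$ over the binary alphabet and to solve it directly by the coding-theoretic method of \cite{PR11} (which handles $w=1$, i.e.\ group testing), generalized to arbitrary $w$. I would deliberately not route through a perfect hash family: composing an $(n,q,d)$-PHF with a small CFF pays an extra factor of roughly $d^2$ (the PHF size), which only reproduces the $r^{w+3}$-type bounds of \cite{B14}; placing a code directly on the $n$ blocks is exactly what removes that loss and makes the size match the non-constructive upper bound in the table.

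The construction. Let $q$ be the smallest prime power with $q=\Theta(wr)$ for which the Gilbert--Varshamov bound still yields a positive rate at relative distance $1-\tfrac1{2wr}$ (Bertrand's postulate gives a prime power in $[c_0wr,2c_0wr]$). First I would build a $q$-ary code $C=\{c^{(1)},\dots,c^{(n)}\}\subseteq[q]^m$, one codeword per block, of length $m=\Theta(wr\log n)$ and pairwise relative Hamming distance at least $1-\tfrac1{2wr}$. Because $q$ is too small for a Reed--Solomon code, $C$ is constructed greedily in the manner of \cite{PR11} (using the derandomization tools of \cite{A86,NSS95}): add codewords one at a time, fixing each codeword's symbols coordinate by coordinate via the method of conditional expectations applied to the pessimistic estimator $\sum_{s<t}x^{g_s}$ (with $x$ a fixed constant $>1$ and $g_s$ the running number of agreements with $c^{(s)}$); a uniformly random codeword agrees with each earlier one in only $m/q=\Theta(m/(wr))$ coordinates in expectation, so with $m=\Theta(wr\log n)$ the estimator starts below $x^{m/(2wr)}$ and the greedy never gets stuck. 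The CFF is then
$$\cF=\bigl\{\,a^{(\ell,S)}\ :\ \ell\in[m],\ S\subseteq[q],\ |S|\le w\,\bigr\},\qquad a^{(\ell,S)}_i=1\iff c^{(i)}_\ell\in S,$$
so $|\cF|=m\sum_{s\le w}\binom qs=O\!\left(m\binom qw\right)$.

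Correctness and counting. Fix blocks $i_1<\dots<i_d$ and $J\subseteq[d]$ with $|J|=w$. For each of the $wr$ pairs $(j,k)$ with $j\in J$, $k\notin J$, the codewords $c^{(i_j)}$ and $c^{(i_k)}$ agree in fewer than $\tfrac m{2wr}$ coordinates, so strictly fewer than $m$ coordinates $\ell$ see a collision $c^{(i_j)}_\ell=c^{(i_k)}_\ell$ for some such pair; pick a coordinate $\ell$ with none. Then $S:=\{c^{(i_j)}_\ell:j\in J\}$ has size at most $w$ and is disjoint from $\{c^{(i_k)}_\ell:k\notin J\}$, whence $a^{(\ell,S)}_{i_j}=1$ for every $j\in J$ and $a^{(\ell,S)}_{i_k}=0$ for every $k\notin J$, which is the point required by the CFF condition. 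The size is $|\cF|=O\!\left(m\binom qw\right)=O\!\left(wr\log n\cdot(eq/w)^w\right)=O\!\left(wr\log n\cdot(ce\,r)^w\right)$, which is $O(r^{w+1}\log n)$ when $w=O(1)$ and, absorbing the leading factor $w$ by slightly enlarging $c$, $O((ce)^w r^{w+1}\log n)$ when $w=o(r)$. Each point $a^{(\ell,S)}$ is read off $C$ in $O(n)$ time, so the construction runs in time linear in the output size $|\cF|\cdot n$ once $C$ is in hand.

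The two places that need care. First, making the code $C$ linear time: a straightforward implementation of the conditional-expectation greedy is quadratic in $n$, so one should either implement it carefully as in \cite{PR11} or precede it by a Reed--Solomon-based reduction of $n$ to ${\rm poly}(r)$ blocks and run the greedy only on the reduced instance. Second, one must verify that $q=\Theta(wr)$ is already large enough for relative distance $1-\Theta(1/(wr))$ to lie strictly below the Gilbert--Varshamov curve, i.e.\ $1-H_q\!\bigl(1-\tfrac1{2wr}\bigr)=\Theta\!\bigl(\tfrac1{wr\log(wr)}\bigr)>0$ for a suitable constant $c_0$; this is the crux, since a larger $q$ would blow up $\binom qw$ and a smaller $q$ would drive the rate to zero, and it is exactly what pins $m=\Theta(wr\log n)$. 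Both rows of the table are then the same construction read in the regimes $w=O(1)$ and $w=o(r)$.
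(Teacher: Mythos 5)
Your proposal is correct and is essentially the paper's own construction: the paper builds, via the \cite{PR11}-style code of Lemma~\ref{code} with $q=\Theta(wr)$ and relative distance $1-\Theta(1/(wr))$, a hitting set for ${\cal H}_{wr}$ and composes it with the trivial $(q,(w,r))$-CFF of indicator functions $\chi_R$, $|R|=w$, which yields exactly your points $a^{(\ell,S)}$ indexed by a code coordinate and a $w$-subset of the alphabet, with the same union bound over the $wr$ cross pairs and the same size bound ${q\choose w}\cdot O(wr\log n)=O((ce)^w r^{w+1}\log n)$. The two care points you flag are precisely the paper's Lemmas~\ref{spcode}/\ref{spcode2} (the entropy estimate showing positive rate at $q=\Theta(wr)$) and Lemma~\ref{code} (the normalized-codeword trick that makes the code construction run in time $O(mn)$), so nothing is missing in substance.
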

Notice that when $w=O(1)$ the size of the construction matches the upper bound obtained with union bound
and is within a factor of $\log r$ of the lower bound.

\subsection{Separating Hash Family}\labell{SHF}\labell{Apple4}

Let $X$ and $\Sigma$ be sets of cardinalities $n$ and $q$, respectively. We call a set $\cF$ of
functions $f:X\to \Sigma$ an $(M; n,q,(d_1,d_2,\ldots,d_r))$-{\it separating hash family} (SHF),~\cite{STW00,SWC08},
if $|\cF|=M$ and for all pairwise disjoint subsets $C_1,C_2,\ldots,C_r\subseteq X$ with $|C_i|=d_i$ for $i=1,2,\ldots,r$, there is at
least one function $f\in \cF$ such that $f(C_1),f(C_2),\ldots,f(C_r)$ are pairwise disjoint subsets. The goal is to find
$(M; n,q,(d_1,d_2,\ldots,d_r))$-SHF with small $M$. The minimal $M$ is denoted by $M(n,q,(d_1,d_2,\ldots,d_r))$.

Notice that $(n,q,d)$-PHF of size $M$ is $(M;n,q,(1,1,\stackrel{d}{\ldots},1))$-SHF
and $(w,r)$-CFF of size $M$ is $(M;n,2,(r,w))$-SHF.

In \cite{BT11}, Bazrafshan and Trund proved that for
$$D_1=\sum_{i=1}^r d_i,$$
\begin{eqnarray}\labell{black}
M(n,q,(d_1,d_2,\ldots,d_r))&\ge & \left(D_1-1\right)\frac{\log n-\log(D_1-1)-\log q}{\log q}\nonumber\\
&=&\Omega\left(D_1\frac{\log n}{\log q}\right).\end{eqnarray} See also \cite{BESZ}.

In \cite{SWZ}, Stinson et. al. proved that an $(M; n,q,(d_1,d_2))$ separating hash families of size
$$M=O((d_1d_2)^{\log^* n}\log n)$$ can be constructed in polynomial time for infinite sequence of integers $n$ and $q>d_1d_2$. The same proof gives a polynomial time construction for any separating hash family of size
$$M=O(D_2^{\log^*n}\log n)$$ where $$D_2=\sum_{1\le i_1<i_2\le r}d_{i_1}d_{i_2}$$ when $q>D_2$.

In \cite{LS06}, Liu and Shen provide an explicit constructions of $(M; n,q,$ $(d_1,d_2))$ separating hash families using algebraic curves over finite fields. They show that for infinite sequence of integers $n$ there is an explicit $(M; n,q,(d_1,d_2))$ separating hash families of size $O(\log n)$ for fixed $d_1$ and $d_2$. This also follows from~\cite{NN93}, an $(n,d_1+d_2)$-universal set over two symbols alphabet is a separating hash families of size $O(\log n)$ for fixed $d_1$ and $d_2$. Their construction is similar to the construction of the tester in \cite{B14}.
In \cite{B14} Bshouty gave a polynomial time construction of an $(M; n,q,(d_1,d_2))$ separating hash families of size $M=((d_1d_2)^4 \log n/\log q)$ for any $q\ge d_1d_2(1+o(1))$ and any~$n$.
He also show that for any constant $c>1$ and $q>D_2$, the following $(M; n,q,(d_1,d_2,\ldots,d_r))$  separating hash family can be constructed in polynomial time
{
\begin{center}
\begin{tabular}{|c|l|c|c|c|}
\hline
 &  & poly time. & Upper & Lower\\
$n$ & $q$ & Size $M=O(\ )$ & Bound & Bound \\
\hline\hline
I.S. & $q\ge {c} (D_2+1)^2$, $q$ P.S. & $D_2\frac{\log n}{\log q}$ &$D_1\frac{\log n}{\log q}$& $D_1\frac{\log n}{\log q}$\\
\hline
all & $q\ge {c} (D_2+1)^2$, $q$ P.S.& $D_2^2\frac{\log n}{\log q}$ &$D_1\frac{\log n}{\log q}$& $D_1\frac{\log n}{\log q}$\\
\hline
I.S. & $ q\ge {c} (D_2+1)$ & $D_2^2\frac{\log n}{\log D_2}$ &$D_1{\log n}$& $D_1\frac{\log n}{\log q}$\\
\hline
all & $ q\ge {c} (D_2+1)$ & $D_2^3\frac{\log n}{\log D_2}$ &$D_1{\log n}$& $D_1\frac{\log n}{\log q}$\\
\hline
I.S. & $ q\ge D_2+1$ & $D_2^3\frac{\log n}{\log D_2}$ &$D_1\log n$& $D_1\frac{\log n}{\log q}$\\
\hline
 all & $ q\ge D_2+1$ & $D_2^4\frac{\log n}{\log D_2}$ &$D_1\log n$& $D_1\frac{\log n}{\log q}$\\
\hline
\end{tabular}
\end{center}}
and an $(M; n,r,(d_1,d_2,\ldots,d_r))$  separating hash family of size
$$M=\frac{{cD_2\choose d_1\ d_2\ \cdots\ d_r} D_2^3}{\log D_2} \log n,$$
can be constructed in time linear in the construction size.

Here we prove the following
\begin{theorem} \labell{SHFrespp} For any constant $c>1$ and $q>D_2$, the following $(M; n,q$ $,(d_1,d_2,\ldots,d_r))$  separating hash family can be constructed in linear time
{
\begin{center}
\begin{tabular}{|c|l|c|c|c|}
\hline
 &  & poly time. & Upper & Lower\\
$n$ & $q$ & Size $M=O(\ )$ & Bound & Bound \\
\hline\hline
all & $ q\ge (D_2+1)^{c}$ & $D_2\frac{\log n}{\log q}$ &$D_1\frac{\log n}{\log q}$& $D_1\frac{\log n}{\log q}$\\
\hline
all & $ q\ge {c} (D_2+1)$ & $D_2{\log n}$ &$D_1{\log n}$& $D_1\frac{\log n}{\log q}$\\
\hline
 all & $ q\ge D_2+2$ & $D_2^3{\log n}$ &$D_1\log n$& $D_1\frac{\log n}{\log q}$\\
\hline
\end{tabular}
\end{center}}
and an $(M; n,r,(d_1,d_2,\ldots,d_r))$  separating hash family of size
$$M={{cD_2\choose d_1\ d_2\ \cdots\ d_r} D_2}\log n,$$
can be constructed in time linear in the construction size.
\end{theorem}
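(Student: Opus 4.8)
The plan is to treat an $(M;n,q,(d_1,\dots,d_r))$-SHF as a $D_1$-restriction problem over the alphabet $\FF_q$ and then apply the Reed--Solomon-code-plus-derandomization machinery of \cite{A86,NSS95,PR11}. Identify the family $\cF$ with an $M\times n$ array over $\FF_q$ whose rows are the functions and whose columns are indexed by $X$; a row $f$ separates disjoint sets $C_1,\dots,C_r$ exactly when, among the $D_1$ entries of that row in the columns $C_1\cup\cdots\cup C_r$, no two columns lying in different $C_i$'s carry the same symbol (collisions \emph{inside} a $C_i$ are irrelevant, which is what distinguishes this from a PHF). If the array is filled with independent uniform symbols, a fixed separation constraint (a choice of the $D_1$ columns together with their partition into the $C_i$'s) is violated in a fixed row with probability at most $\sum_{i<j}d_id_j/q=D_2/q$, hence in every one of the $M$ rows with probability at most $(D_2/q)^M$. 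As there are at most $n^{D_1}$ such constraints (read one off an ordered $D_1$-tuple of distinct columns), a union bound gives a valid SHF as soon as $M=\Omega\!\left(D_1\log n/\log(q/D_2)\right)$; since $D_1\le D_2+1$ in every nontrivial instance, and $\log(q/D_2)$ is $\Theta(\log q)$, $\Theta(1)$, $\Theta(1/D_2)$ in the three rows respectively, this already explains the shape of the three size bounds $O(D_2\log n/\log q)$, $O(D_2\log n)$, $O(D_2^3\log n)$ up to the factors discussed below.

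To make the construction explicit and linear time I would follow \cite{PR11}: restrict each column to be the evaluation at $M\le q$ fixed points of a polynomial over $\FF_q$ of degree $<k$ with $q^k\ge n$ — i.e. take the columns to be $n$ distinct Reed--Solomon codewords. Two distinct such codewords agree in at most $k-1$ coordinates, so for \emph{any} separation constraint the number of rows in which it is violated is at most $\sum_{i<j}d_id_j(k-1)=D_2(k-1)$. Consequently, when $M>D_2(k-1)$ every choice of $n$ distinct codewords is already a valid SHF, and with $k=\lceil\log_q n\rceil$ this needs only $q\ge M\approx D_2\log n/\log q$; this covers the large-$q$ regime (row~1) directly and is produced in linear time just by evaluating polynomials (for $n$ so large that no single Reed--Solomon code simultaneously has enough codewords and block length $\le q$ one first reduces the number of blocks as in the next paragraph). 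When $q$ is too small for this one-shot bound, or when a sharper size is wanted, I would instead pick the $k$ coefficients of each successive column greedily by the method of conditional expectations, keeping below $1$ the conditional expected number of separation constraints through that column that are still unseparated; with an incremental evaluation each coefficient costs $\mathrm{poly}(q,M)$ time, so the whole array is produced in time linear in its size, and the same union-bound count recovers the stated $M$.

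For the regimes where $q$ exceeds $D_2$ by only a constant factor (row~2) or an additive amount (row~3), I would first shrink the number of blocks: using the perfect hash families of Theorem~\ref{ThH1c} (with $d\leftarrow D_1$) build an $(n,n_0,D_1)$-PHF $H$ for a suitable $n_0=\mathrm{poly}(D_2)$, and compose it with an SHF $G$ on the $n_0$-element universe obtained by the Reed--Solomon method above. Since any $h\in H$ that is injective on $C_1\cup\cdots\cup C_r$ sends it to disjoint sets of the correct sizes, $\{g\circ h : g\in G,\ h\in H\}$ is the required SHF, of size $|G|\cdot|H|$; the PHF factor is a constant in row~2 (because $q=\Theta(D_2)$) and a $\mathrm{poly}(D_2)$ factor in row~3, which is exactly what turns the core $O(D_2\log n/\log q)$ into the stated bounds. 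The last, small-alphabet assertion ($q=r$) is handled the same way: compose an $(n,cD_2,D_1)$-PHF with a $(cD_2,r,(d_1,\dots,d_r))$-SHF found by exhaustive (or greedy covering) search over the $cD_2$-letter universe; the number of separation constraints on that universe is $\binom{cD_2}{d_1\ d_2\ \cdots\ d_r}$, which is the source of the multinomial in the claimed size, and the remaining $\mathrm{poly}(D_2)\cdot\log n$ comes from the perfect hash family, all within time linear in the output.

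The main obstacle is the genuinely linear (not merely polynomial) running time of the greedy derandomization: one has to show that after fixing a prefix of the coefficients of a new Reed--Solomon column, the conditional expectation of the number of separation constraints through that column that remain unseparated can be maintained in amortized $\mathrm{poly}(q)$ time per coefficient, so that the pass over all $n$ columns costs $O(Mn)$ rather than $O(Mn)$ times the number of constraints. A secondary difficulty is the bookkeeping in the composition step — choosing $n_0$ and the perfect-hash parameters so that the product of sizes lands exactly on the $O(D_2\log n)$, $O(D_2^3\log n)$ and $\binom{cD_2}{d_1\ d_2\ \cdots\ d_r}D_2\log n$ bounds — together with the boundary cases (very small $q$, and lopsided $(d_1,\dots,d_r)$ where $D_2$ is close to $D_1$) in which one must invoke the weakest-alphabet perfect hash family available from Theorem~\ref{ThH1c}.
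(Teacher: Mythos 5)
Your opening analysis correctly isolates the right degree parameter (only cross-pairs between distinct $C_i$'s matter, giving $D_2$ rather than $\binom{D_1}{2}$), but the explicit constructions you then propose do not deliver the stated bounds, and the one route that might is exactly the step you leave open. (i) The Reed--Solomon construction needs block length at most the alphabet size, i.e.\ $q\gtrsim D_2\log n/\log q$; none of the table rows guarantees this (row 1 only assumes $q\ge (D_2+1)^c$ while $n$ is arbitrary), so it does not ``cover the large-$q$ regime directly''. (ii) Your fallback, composing with an $(n,n_0,D_1)$-PHF, reimposes injectivity on all of $C_1\cup\cdots\cup C_r$ and hence pays the PHF cost $\Omega(D_1^2\log n/\log n_0)$ coming from Theorem~\ref{ThH1c}; it is never ``a constant'' factor as you assert for row 2, and when $D_2=\Theta(D_1)$ (e.g.\ $d_1=\cdots=d_{r-1}=1$) it alone exceeds the claimed $O(D_2\log n/\log q)$ and $O(D_2\log n)$ of rows 1--2 by roughly a factor of $D_2$. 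For the alphabet-$r$ claim the required $(n,cD_2,D_1)$-PHF need not even be available from Theorem~\ref{ThH1c}, since $cD_2$ can be far smaller than $D_1(D_1-1)/2$. (iii) The remaining route, derandomizing column by column against all roughly $n^{D_1}$ separation constraints by conditional expectations, is precisely where linear running time is unestablished --- you flag it yourself as the main obstacle.

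The paper's proof avoids all of this. It considers the polynomials $\Delta_{(C_1,\ldots,C_r)}=\prod_{1\le k<j\le r}\prod_{i_1\in C_k,\,i_2\in C_j}(x_{i_1}-x_{i_2})$, which lie in ${\cal H}_{D_2}$, and applies Lemma~\ref{hB}: the linear-time code constructions of Lemmas~\ref{code}, \ref{spcode} and \ref{spcode2} (Porat--Rothschild codes of relative distance $1-1/(D_2+1)$ over $\FF_{q'}$) give a hitting set $S\subseteq\FF_{q'}^n$ for ${\cal H}_{D_2}$ of size $O(D_2\log n/\log(q'/e(D_2+1)))$, resp.\ $O(D_2 q'^2\log n/(q'-D_2-1)^2)$: each factor $x_{i_1}-x_{i_2}$ evaluates across the assignments to a nonzero codeword, hence vanishes on at most a $1/(D_2+1)$ fraction of them, and a union bound over the at most $D_2$ factors shows some assignment hits $\Delta_{(C_1,\ldots,C_r)}$. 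Thus the derandomization happens once, at the level of the code's minimum distance, with no per-constraint bookkeeping, and $S$ is already an SHF with the $D_2$-type size over the large alphabet $q'$. The alphabet is then reduced by composing with a small \emph{SHF} on the $q'$-point universe, not a $D_1$-PHF (Theorem~\ref{SHFres}: $M(n,q,(d_1,\ldots,d_r))\le M(q',q,(d_1,\ldots,d_r))\cdot|S|$); for the alphabet-$r$ statement one composes with the family of all maps that send chosen disjoint subsets of $\FF_{q'}$ of sizes $d_1,\ldots,d_r$ to the $r$ colours, whose cardinality is the multinomial coefficient appearing in the bound, exactly as in the CFF argument of Theorem~\ref{CFFr}. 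Replacing your PHF-composition and RS/greedy steps by this hitting-set-plus-SHF-composition is the missing idea.
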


\section{Preliminary Constructions}
A {\it linear code} over the field $\FF_q$ is a linear
subspace $C\subset \FF_q^m$. Elements in the code
are called {\it words}. A linear code $C$ is called
$[m,k,d]_q$ {\it linear code} if $C\subset \FF_q^m$
is a linear code, $|C|=q^k$ and for every two words $v$ and $u$
in the code $\dist(v,u):=|\{i\ |\ v_i\not=u_i\}|\ge d$.

The $q$-ary entropy function is
$$H_q(p)=p\log_q \frac{q-1}{p}+(1-p)\log_q\frac{1}{1-p}.$$

The following is from \cite{PR11} (Theorem 2)
\begin{lemma} \label{codeo} Let $q$ be a prime power, $m$ and $k$ positive integers
and $0\le \delta\le 1$. If $k\le (1-H_q(\delta))m$, then an $[m,k,\delta m]_q$ linear code
can be globally explicit constructed in time $O(mq^k)$.
\end{lemma}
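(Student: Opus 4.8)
The plan is to prove the two halves of the lemma separately: first the Gilbert--Varshamov existence statement (that the stated parameters are achievable by \emph{some} linear code), and then the derandomization that turns the probabilistic existence argument into a deterministic algorithm running in time $O(mq^k)$. I would construct the code through its generator matrix $G\in\FF_q^{k\times m}$, viewing $G$ as a list of $m$ columns $g_1,\dots,g_m\in\FF_q^k$; the codewords are $z^\top G$ for $z\in\FF_q^k$, and since the code is linear its minimum distance equals its minimum nonzero weight, so it suffices to guarantee $\mathrm{wt}(z^\top G)\ge \delta m$ for every nonzero message $z$.

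For the existence half I would use the first-moment method. If the columns $g_1,\dots,g_m$ are chosen independently and uniformly from $\FF_q^k$, then for a fixed nonzero $z$ each coordinate $z^\top g_i$ is uniform on $\FF_q$ and the coordinates are independent, so $z^\top G$ is uniform on $\FF_q^m$. Hence $\Pr[\mathrm{wt}(z^\top G)<\delta m]=V_q(\delta m-1,m)/q^m$, where $V_q(\cdot,m)$ denotes the Hamming ball volume. Using the standard entropy estimate $V_q(\delta m,m)\le q^{H_q(\delta)m}$ (valid in the relevant range $\delta\le 1-1/q$) together with the hypothesis $k\le(1-H_q(\delta))m$, the expected number of nonzero messages with $\mathrm{wt}(z^\top G)<\delta m$ is at most $(q^k-1)q^{H_q(\delta)m-m}<q^{k-(1-H_q(\delta))m}\le 1$, so some choice of $G$ yields an $[m,k,\delta m]_q$ code.

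To make this explicit I would derandomize the column-by-column choice with the method of conditional expectations, using a multiplicative (Chernoff-type) pessimistic estimator. Fix $\beta=(1-\delta)(q-1)/\delta>1$ and, writing $c_j(z)$ for the number of indices $i\le j$ with $z^\top g_i=0$ once the first $j$ columns are fixed, take as potential
\begin{equation}
\Phi_j=\beta^{-(1-\delta)m}\left(\frac{\beta+q-1}{q}\right)^{m-j}\sum_{z\ne 0}\beta^{c_j(z)} .
\end{equation}
A short calculation shows $\Phi_j=\E_{g_{j+1}}[\Phi_{j+1}]$, so at each step there is a column $g_{j+1}$ with $\Phi_{j+1}\le\Phi_j$; the above choice of $\beta$ is exactly the one making $\Phi_0<1$ under the hypothesis, so the non-increasing chain ends at $\Phi_m<1$. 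Since every bad message contributes at least $1$ to the value $\Phi_m=\sum_{z\ne 0}\beta^{\,c_m(z)-(1-\delta)m}$, the integer count of nonzero messages with $\mathrm{wt}(z^\top G)<\delta m$ is strictly below $1$, hence $0$, and the constructed code has the required distance.

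The main obstacle is meeting the stated running time $O(mq^k)$ rather than the $O(mq^{2k})$ that a naive implementation gives. The key point is that the potential depends on the history only through the single weighted sum $S_j=\sum_{z\ne0}\beta^{c_j(z)}$, and appending a column $g$ changes it by $S_{j+1}=S_j+(\beta-1)A(g)$ with $A(g)=\sum_{z\ne0,\,z^\top g=0}\beta^{c_j(z)}$, so minimizing $A(g)$ minimizes $\Phi_{j+1}$; moreover $\E_g[A(g)]=S_j/q$, which re-derives the averaging identity. I would maintain the length-$q^k$ array $(\beta^{c_j(z)})_{z}$ and select $g_{j+1}$ one coordinate at a time by conditional expectation over its $k$ entries, aiming to produce each of the $m$ columns with $O(q^k)$ arithmetic operations over $\FF_q$. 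Verifying that this entrywise selection can in fact be executed within this budget---organizing the partial sums so that the per-column cost stays $O(q^k)$ rather than blowing up by a factor of $q^k$---is the delicate bookkeeping, and is precisely the content of \cite{PR11}.
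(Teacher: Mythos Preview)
The paper does not supply its own proof of this lemma; it simply quotes the statement as Theorem~2 of \cite{PR11}. Your sketch is correct and is precisely the Porat--Rothschild argument: Gilbert--Varshamov existence via the first moment, derandomized column-by-column with the method of conditional expectations using the Chernoff-type pessimistic estimator you wrote down (your choice of $\beta$ makes the base of the exponential exactly $q^{H_q(\delta)-1}$, so $\Phi_0<1$ under the hypothesis). Your closing remark is also accurate: the only nontrivial point is the bookkeeping that keeps the per-column cost at $O(q^k)$ rather than $O(q^{2k})$, and for that both you and the paper defer to \cite{PR11}.
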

Notice that to construct $n$ codewords in an $[m,k,\delta m]_q$ linear code where $q^{k-1}<n\le q^k$,
the time of the construction is $O(mq^k)=O(qmn)$.

We now show
\begin{lemma} \label{code} Let $q$ be a prime power, $m$ and $k$ positive integers,
 $0\le \delta\le 1$ and $n$ an integer such that $q^{k-1}<n\le q^k$.
 If $k\le (1-H_q(\delta))m$, then a set of
$n$ codewords in an $[m,k,\delta m]_q$ linear code
can be globally explicit constructed in time $O(mn)$.
\end{lemma}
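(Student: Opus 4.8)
The plan is to bootstrap Lemma~\ref{codeo} by a standard concatenation-with-blocks trick so that the $O(mq^k)$ factor collapses to $O(mn)$. The obstruction in Lemma~\ref{codeo} is that it builds the \emph{entire} code, which has $q^k$ words; we only need $n\le q^k$ of them. Write $n$ in base $q$ as a string of $k$ digits, $j=(j_1,\ldots,j_k)\in\FF_q^k$, identifying $[n]$ (or rather a size-$n$ subset of $\FF_q^k$) with the set of such digit strings. The idea is to produce the codeword assigned to $j$ by a divide-and-conquer scheme: split the $k$ information symbols into two halves of $k/2$ symbols each, recursively construct the (much smaller) codes for each half, and then glue. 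But a cleaner route, and the one I would actually write, is the following.

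First I would reduce to the case $k=O(1)$, i.e. $n\le q^{O(1)}$, for which Lemma~\ref{codeo} already runs in time $O(m q^{O(1)})=O(m\cdot poly(n))$; that is not yet $O(mn)$, so even this needs care. The right move is to pick a constant $t$ and a code $C_0$ that is $[m_0,t,\delta m_0]_q$ with $m_0=O(t)=O(1)$ obtained from Lemma~\ref{codeo} in time $O(m_0 q^t)=O(q^t)=O(n^{O(1)})$ — still not linear. So instead I would take $t$ to be $\Theta(\log n/\log q)$ rounded so that $q^{t}$ is between, say, $n^{1/2}$ and $n$: then Lemma~\ref{codeo} constructs the whole inner code $C_0\subseteq\FF_q^{m_0}$, $m_0=O(t)$, $|C_0|=q^t$, in time $O(m_0 q^t)=O(t\cdot n)=O(n\log n)$... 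Let me instead do the honest thing: choose $t=\lceil k/2\rceil$, build $C_0$ of dimension $t$ in time $O(m_0 q^{t})=O(m_0\sqrt{n}\,q)$, and build a second code $C_1$ of dimension $k-t\le t$ the same way; the sum of the two is $O(m\sqrt{n})=o(mn)$. Then the direct sum $C_0\oplus C_1$ is an $[m_0+m_1,k,\delta\min(m_0,m_1)]_q$ code... but the minimum distance of a direct sum is only $\min(d_0,d_1)$, and more importantly $m_0+m_1$ may not be the target $m$, and the rate condition $k\le(1-H_q(\delta))m$ must be preserved. This is where the real work lies.

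The clean statement to prove is: given $n\le q^k$ and the rate bound $k\le(1-H_q(\delta))m$, split $m=m_0+m_1$ and $k=k_0+k_1$ proportionally ($k_i\le(1-H_q(\delta))m_i$, which one can always arrange since the constraint is linear and we may round), apply Lemma~\ref{codeo} to get an $[m_i,k_i,\delta m_i]_q$ code $C_i$ in time $O(m_i q^{k_i})$, and output $C:=\{(c_0,c_1):c_0\in C_0,c_1\in C_1\}$, which is an $[m,k,\delta m]_q$ linear code because distances add across the two blocks: $\dist((c_0,c_1),(c_0',c_1'))=\dist(c_0,c_0')+\dist(c_1,c_1')$, and at least one of the two summands is nonzero, so it is $\ge \delta m_0$ or $\ge\delta m_1$; to get the full $\ge\delta m$ one should instead require each $C_i$ to have relative distance $\delta$ \emph{and} note that for a nonzero codeword of $C$ that is nonzero in \emph{both} blocks the distance is $\ge\delta(m_0+m_1)=\delta m$, while if it vanishes on one block it is a nonzero codeword of the other block of length $m_i$ — so the honest distance guarantee is $\delta\min(m_0,m_1)$ unless we take $C_0,C_1$ to be punctured/lengthened copies of a single code. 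The cleanest fix, which I would adopt: choose the split so that $q^{k_0},q^{k_1}\le 2\sqrt n$ roughly but keep $k_0=k_1$ impossible in general, so instead iterate — recursively apply the lemma to each half, giving recursion $T(n)=2T(\sqrt n)+O(m_i q^{k_i})$ with $q^{k_i}\le\sqrt n$, which solves to $T(n)=O(m\cdot n)$ after $O(\log\log n)$ levels, since the leaf cost dominates geometrically and $\sum m_i\le m$ per level (actually $\sum m_i\le m$ fails across levels — each level re-spends $O(m)$, and there are $O(\log\log n)$ levels, giving $O(m n)$ only if we are careful that the per-level work is $O(mn)$, which it is: at each level the total is $\sum_j O(m^{(j)} q^{k^{(j)}})$ with $\sum_j m^{(j)}=m$ and each $q^{k^{(j)}}\le n$, hence $\le O(mn)$, and there are $O(\log\log n)$ levels — so strictly this gives $O(mn\log\log n)$).

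To get the promised clean $O(mn)$ with no extra factor, I expect the intended argument is the one-level split where one half has dimension $k-1$ handled by a trivial base case: more precisely, peel off information symbols one at a time is too slow, so the intended trick is almost surely: write the message index $j\in\{0,\dots,n-1\}$, let $j=j'q^{\ell}+j''$ with $\ell=\lceil k/2\rceil$, precompute the two codebooks $C'$ (dimension $k-\ell$) and $C''$ (dimension $\ell$) via Lemma~\ref{codeo} in total time $O(m'q^{k-\ell}+m''q^{\ell})=O(m q^{\lceil k/2\rceil})=O(m\sqrt{qn})=O(mn)$ (using $q\le n$, which holds as $q^{k-1}<n$ so $q<n$ when $k\ge2$; the case $k=1$ is the base case where $n\le q$ and we just call Lemma~\ref{codeo} directly in time $O(mq)=O(mn)$), and output the concatenation of the $j'$-th word of $C'$ with the $j''$-th word of $C''$. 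So the main obstacle, and the one I would spend the most care on, is the distance analysis of this concatenation: one must ensure the resulting $[m'+m'',k,?]_q$ code still has relative distance $\ge\delta$, which forces choosing $m',m''$ with $m'+m''=m$, $k-\ell\le(1-H_q(\delta))m'$, $\ell\le(1-H_q(\delta))m''$, and then arguing that a nonzero $(c',c'')$ with, say, $c'\ne0$ has weight $\ge\delta m'$, while if we instead want $\ge\delta m$ we use $c'\ne0$ \emph{and} pad, i.e. we should define the code as $\{(c',c''): \text{msg}=(j',j'')\}$ where $c'$ ranges over a relative-$\delta$ code of length $m'$ and $c''$ over one of length $m''$, giving minimum weight $\ge \delta m' $ or $\ge\delta m''$ — hence $\ge\delta\min(m',m'')$. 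The honest conclusion is therefore an $[m,k,\delta'm]_q$ code with $\delta'=\delta\min(m',m'')/m\ge\delta/3$; since the lemma only needs \emph{some} explicit code of relative distance on the order of $\delta$ for its applications, I would either (a) state and prove the lemma with $\delta/2$ in place of $\delta$ and remark the constant is immaterial, or (b) balance $m'=m''=m/2$ and require the slightly stronger rate hypothesis $k\le 2(1-H_q(\delta))(m/2)=(1-H_q(\delta))m$, note $\ell,k-\ell\le\lceil k/2\rceil\le (1-H_q(\delta))(m/2)+O(1)$ and absorb the $O(1)$ — this is the routine bookkeeping I would not grind through here. In short: one-level digit split of the message, two applications of Lemma~\ref{codeo} each to a code of dimension $\le\lceil k/2\rceil$ hence cost $O(mq^{\lceil k/2\rceil})=O(m\sqrt{qn})=O(mn)$, concatenate the two coordinate-blocks, and check the distance adds across blocks; the distance check is the only nontrivial point.
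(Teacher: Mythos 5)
There is a genuine gap, and it is exactly the point you flagged but did not resolve: the distance of your two-block construction. If $C=\{(c',c'') : c'\in C', c''\in C''\}$ with $C'$ of length $m'$ and $C''$ of length $m''=m-m'$, then a nonzero codeword supported on a single block has weight at most the length of that block, so the minimum distance of $C$ is at most $\min(m',m'')\le m/2$ --- no choice of split, rates, or inner relative distances can push the relative distance of the direct sum above $1/2$. The lemma is stated for all $0\le\delta\le 1$ and, crucially, it is applied in this paper (Lemmas~\ref{spcode}, \ref{spcode2} and then Lemma~\ref{hB}) with $\delta=1-\frac{1}{d+1}$, i.e.\ relative distance very close to $1$: the hitting-set argument needs each linear factor $x_{i_1}-x_{i_2}$ to vanish on at most $m/(d+1)$ of the $m$ assignments so that a product of $d$ factors vanishes on fewer than $m$ of them. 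With relative distance $\delta/2\approx 1/2$ that union bound collapses, so your fix (a) (prove the lemma with $\delta/2$ and call the constant immaterial) proves a different statement that does not support the applications, and your fix (b) only rebalances the rate condition and never touches the distance defect. The recursive $O(mn\log\log n)$ variant inherits the same problem at every level.

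The paper's actual proof does not decompose the code at all; it goes inside the Porat--Rothschild construction of Lemma~\ref{codeo} (method of conditional expectations over the generator matrix) and makes a normalization observation: fix the first generator column to be the all-one vector and note that it suffices to guarantee weight $\ge\delta m$ for the \emph{normalized} codewords $\sum_i\lambda_i v_i$ whose first nonzero $\lambda_i$ equals $1$, since every codeword is a scalar multiple of a normalized one and scaling preserves weight. There are only $(q^k-1)/(q-1)=O(q^{k-1})$ normalized codewords, and $q^{k-1}<n$ by hypothesis, so the derandomized construction runs in time $O(mq^{k-1})=O(mn)$; any further codewords needed to reach $n$ are obtained as scalar multiples at cost $O(m)$ each. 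So the factor-$q$ saving comes from shrinking the set of codewords the potential function must track, preserving the full distance $\delta m$, rather than from splitting the message space, which inherently caps the distance at $m/2$. If you want to salvage your outline, you would need a combining step that genuinely adds distances for \emph{every} nonzero message (e.g.\ concatenation with an outer code, which changes the parameters), not a direct sum of coordinate blocks.
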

\begin{proof} The same proof of Lemma~\ref{code} in~\cite{PR11}
works here with the observation that the first column of the generator
matrix $G=[v_1|\cdots|v_k]$ can be the all-one vector $v_1=(1,1,\ldots,1)'$ and
it is enough to ensure that every codeword of the form $\sum_{i=1}^k \lambda_i v_i$,
where the first nonzero $\lambda_i$ is $1$, is of weight at least $\delta m$.
We call such codeword a normalized codeword. The number of normalized codewords
is $(q^k-1)/(q-1)=O(q^{k-1})$. Obviously, codeword $v$ is equal to $\lambda u$ for
some normalized codewords $u$ and the minimum weight of normalized codeword
is the minimum weight of the code.

Therefore we first construct the normalized codewords as in \cite{PR11}
in time $O(mq^{k-1})$
and then add any $n-(q^{k}-1)/(q-1)$ codewords.
\end{proof}

All the results in this paper uses Lemma~\ref{code} and
therefore they are globally explicit constructions.
We now show
\begin{lemma}\label{spcode} Let $q$ be a prime power, $1<h<q/4$ and
$$m=\left\lceil \frac{h\ln (q(n+1))}{\ln q-\ln h-1}\right\rceil.$$
A set of $n$ nonzero codewords of a $$\left[m,\left\lceil\frac{\log (n+1)}{\log q}\right\rceil,\left(1-\frac{1}{h}\right) m\right]_q$$ linear code
can be constructed in time $O(nm)$.
\end{lemma}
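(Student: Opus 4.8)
The plan is to derive Lemma~\ref{spcode} directly from Lemma~\ref{code} by choosing the parameters $\delta$, $k$, and $m$ appropriately and checking that the hypothesis $k\le (1-H_q(\delta))m$ is satisfied. Set $\delta = 1-1/h$, so that $1-\delta = 1/h$, and set $k = \lceil \log(n+1)/\log q\rceil$, so that $q^{k-1} < n+1 \le q^k$; this is exactly the regime of Lemma~\ref{code} (up to the harmless shift from $n$ to $n+1$, which costs only one extra codeword and is absorbed in the $O(nm)$ bound). It then remains to verify that the value of $m$ given in the statement is large enough, i.e.\ that $(1-H_q(1-1/h))\,m \ge k$.

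The key step is therefore to bound the $q$-ary entropy $H_q(1-1/h)$ from above, or rather to lower bound $1-H_q(1-1/h)$. Writing $p = 1-1/h$ we have
$$H_q(p) = p\log_q\frac{q-1}{p} + (1-p)\log_q\frac{1}{1-p} \le p\log_q q + (1-p)\log_q\frac{1}{1-p} = p + \frac{1}{h}\log_q h,$$
using $\frac{q-1}{p}\le q$ (valid since $p\ge 3/4$ and $q>4$, so $(q-1)/p \le (q-1)\cdot 4/3 < q$ when... actually more simply $(q-1)/p \le q$ iff $q-1 \le pq$ iff $1 \ge q(1-p) = q/h$, which holds since $h<q/4<q$). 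Hence
$$1 - H_q(p) \ge 1 - p - \frac{\log_q h}{h} = \frac{1}{h} - \frac{\log_q h}{h} = \frac{\ln q - \ln h}{h\ln q}.$$
A cleaner route avoiding the constant $1$ in the denominator of the claimed $m$: instead bound $\frac{q-1}{p} \le q$ and then also note we want the slightly stronger $1-H_q(p)\ge (\ln q - \ln h - 1)/(h\ln q)$, which follows from the above since $(\ln q-\ln h)/(h\ln q) \ge (\ln q - \ln h - 1)/(h\ln q)$; the extra $-1$ in the denominator of $m$ is there precisely to give slack. Thus $(1-H_q(p))\,m \ge \frac{\ln q - \ln h - 1}{h\ln q}\cdot m$, and plugging in $m = \lceil h\ln(q(n+1))/(\ln q - \ln h - 1)\rceil \ge h\ln(q(n+1))/(\ln q - \ln h - 1)$ gives
$$(1-H_q(p))\,m \ge \frac{\ln(q(n+1))}{\ln q} = 1 + \frac{\ln(n+1)}{\ln q} \ge \left\lceil\frac{\log(n+1)}{\log q}\right\rceil = k,$$
where the last inequality uses $\lceil x\rceil \le 1+x$. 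Note $\ln q - \ln h - 1 > 0$ is guaranteed by $h < q/e$, which follows from $h<q/4 < q/e$, so $m$ is well defined and positive.

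With the hypothesis of Lemma~\ref{code} verified, that lemma produces $n$ (in fact $n+1$, from which we discard one codeword or, more carefully, we apply it with $n+1$ in place of $n$) codewords of an $[m, k, \delta m]_q = [m, \lceil\log(n+1)/\log q\rceil, (1-1/h)m]_q$ linear code in time $O(mn)$. Finally, since the code has minimum distance $(1-1/h)m \ge 1$ (as $h>1$), the all-zero word is the only word of weight $0$, so at most one of the returned codewords is zero; discarding it if present leaves $n$ nonzero codewords (here we should apply Lemma~\ref{code} with $n+1$ throughout to have exactly $n$ left), still constructed in time $O(mn)$. The main thing to be careful about is the bookkeeping with the $+1$ shifts and ensuring the chosen $k$ simultaneously satisfies $q^{k-1}<n+1\le q^k$ and $k \le (1-H_q(\delta))m$; the entropy estimate above is the only real content, and it is routine. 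I do not anticipate a serious obstacle — this is a direct specialization of Lemma~\ref{code} with an explicit entropy bound.
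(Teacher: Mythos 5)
Your overall plan coincides with the paper's: specialize Lemma~\ref{code} with $\delta=1-1/h$ and $k=\lceil\log(n+1)/\log q\rceil$, verify $k\le(1-H_q(\delta))m$ via a lower bound on $1-H_q(1-1/h)$, and clean up the zero codeword. However, the one step that carries all the content — the entropy estimate — is wrong as you wrote it. You claim $\frac{q-1}{p}\le q$ for $p=1-1/h$; this inequality is equivalent to $q(1-p)\le 1$, i.e.\ to $q\le h$, which \emph{contradicts} the hypothesis $h<q/4$ (your parenthetical has the implication backwards: $h<q/4$ gives $q/h>4>1$, not $q/h\le 1$). In fact $\frac{q-1}{p}=\frac{(q-1)h}{h-1}>q$ whenever $h<q$, so $p\log_q\frac{q-1}{p}>p$ and your bound $H_q(p)\le p+\frac1h\log_q h$ is reversed. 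Consequently the quantity $1-H_q(1-1/h)$ is strictly \emph{smaller} than $\frac{\ln q-\ln h}{h\ln q}$, so you cannot obtain the needed bound by first proving that stronger statement and then weakening it. The ``$-1$'' in the denominator of $m$ is not free slack: it is exactly what is required to absorb the cross term $\left(1-\frac1h\right)\log_q\frac{1-1/q}{1-1/h}$ that your estimate silently discards.

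The target inequality $1-H_q(1-1/h)\ge\frac{\ln q-\ln h-1}{h\ln q}$ is true, and the paper proves it by keeping that cross term: since $\frac{1-1/q}{1-1/h}\le\frac{1}{1-1/h}$, one has $-\left(1-\frac1h\right)\log_q\frac{1-1/q}{1-1/h}\ge\left(1-\frac1h\right)\log_q\left(1-\frac1h\right)$, and then $\ln\left(1-\frac1h\right)\ge-\frac{1}{h-1}$ (from $\ln x\ge(x-1)/x$) gives $\left(1-\frac1h\right)\log_q\left(1-\frac1h\right)\ge-\frac{1}{h\ln q}$, which is precisely where the extra $-1$ comes from. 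With that corrected estimate, the remainder of your argument (the choice of $k$, the final display $(1-H_q(\delta))m\ge\frac{\ln(q(n+1))}{\ln q}\ge\lceil\log(n+1)/\log q\rceil$, applying Lemma~\ref{code} with $n+1$ codewords and discarding the zero word) matches the paper and is fine.
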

\begin{proof} By Lemma~\ref{code} it is enough to show that
$$\left\lceil\frac{\log (n+1)}{\log q}\right\rceil\le \left( 1-H_q\left(1-\frac{1}{h}\right)\right)m.$$
Now since for $x>0$, $(x-1)/x\le \ln x$ we have
\begin{eqnarray*}
1-H_q\left(1-\frac{1}{h}\right)&=& 1-\left(\left(1-\frac{1}{h}\right)\log_q \frac{q-1}{1-1/h}+\frac{1}{h}\log_qh\right)\\
&=& \frac{1}{h}-\frac{1}{h}\log_qh- \left(1-\frac{1}{h}\right)\log_q \frac{1-1/q}{1-1/h}\\
&\ge& \frac{1}{h}-\frac{1}{h}\log_qh+ \frac{h-1}{h}\log_q ({1-1/h})\\
&\ge& \frac{1}{h}-\frac{1}{h}\log_qh- \frac{1}{h\ln q}\\
&=& \frac{\ln q-\ln h-1}{h\ln q}.
\end{eqnarray*}
Now
$$\left( 1-H_q\left(1-\frac{1}{h}\right)\right)m\ge \frac{\ln q(n+1)}{\ln q}\ge \left\lceil\frac{\log (n+1)}{\log q}\right\rceil.$$
\end{proof}

When $h=O(q)$ we show
\begin{lemma}\label{spcode2} Let $q$ be a prime power, $2\le q/4\le h\le q-1$ and
$$m=\left\lceil \frac{4(q-1)^2h\ln (q(n+1))}{(q-h)^2}\right\rceil.$$
A set of $n$ nonzero codewords of a $$\left[m,\left\lceil\frac{\log (n+1)}{\log q}\right\rceil,\left(1-\frac{1}{h}\right) m\right]_q$$ linear code
can be constructed in time $O(nm)$.
\end{lemma}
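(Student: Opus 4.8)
The plan is to mimic the proof of Lemma~\ref{spcode} exactly, the only difference being that we are now in the regime $q/4\le h\le q-1$ where the bound $(x-1)/x\le\ln x$ applied to $q/h$ is too lossy; instead we use the sharper estimate $\ln(1+t)\ge t-t^2/2$ (equivalently $-\ln(1-u)\ge u$ and a second-order refinement), which is what produces the $(q-h)^2$ in the denominator rather than a logarithmic factor. By Lemma~\ref{code} it suffices to verify
$$\left\lceil\frac{\log(n+1)}{\log q}\right\rceil\le\left(1-H_q\left(1-\frac{1}{h}\right)\right)m,$$
so the whole argument reduces to a lower bound on $1-H_q(1-1/h)$.

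First I would write, exactly as before,
$$1-H_q\left(1-\frac1h\right)=\frac1h-\frac1h\log_q h-\left(1-\frac1h\right)\log_q\frac{1-1/q}{1-1/h}.$$
The term $-(1-1/h)\log_q\frac{1-1/q}{1-1/h}=(1-1/h)\log_q\frac{1-1/h}{1-1/q}$ is nonnegative since $h\le q-1$ forces $1-1/h\le 1-1/q$... wait, that makes the log negative; so this term is $\le 0$ and must be bounded below. Write $\frac{1-1/h}{1-1/q}=1-\frac{1/h-1/q}{1-1/q}=1-\frac{q-h}{h(q-1)}$ and use $\ln(1-u)\ge -\frac{u}{1-u}$ (or simply $\ge -2u$ for $u\le 1/2$, valid here since $\frac{q-h}{h(q-1)}\le \frac{q-h}{(q/4)(q-1)}$ is small in the relevant range) to get this term is $\ge -\frac{2(q-h)}{h(q-1)\ln q}\cdot\frac1{\ldots}$; I would keep constants loose. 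Similarly handle $\frac1h(1-\log_q h)=\frac{1}{h}\log_q(q/h)=\frac{\ln(q/h)}{h\ln q}$ and lower-bound $\ln(q/h)=-\ln(h/q)=-\ln(1-(q-h)/q)\ge (q-h)/q$. Combining, the dominant positive contribution is $\frac{q-h}{qh\ln q}$ and the negative corrections are $O\!\left(\frac{(q-h)^2}{q^2 h\ln q}\right)$ — more precisely one shows $1-H_q(1-1/h)\ge \frac{(q-h)^2}{4(q-1)^2 h\ln q}$ after collecting terms, which is the clean form matching the statement. Then with the stated $m=\lceil 4(q-1)^2h\ln(q(n+1))/(q-h)^2\rceil$ one gets $(1-H_q(1-1/h))m\ge \ln(q(n+1))/\ln q=\log_q(q(n+1))\ge\lceil\log(n+1)/\log q\rceil$, and Lemma~\ref{code} delivers the code and the $O(nm)$ construction time, since $n\le q^{\lceil\log(n+1)/\log q\rceil}$ makes the hypothesis $q^{k-1}<n\le q^k$ applicable (or trivially adjustable).

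The main obstacle is purely the inequality bookkeeping: one must choose which elementary bounds on $\ln(1\pm u)$ to apply so that all the error terms are genuinely dominated by $(q-h)^2/((q-1)^2 h\ln q)$ uniformly over $q/4\le h\le q-1$, with the explicit constant $4$ (or whatever constant survives) coming out right. In particular one has to be careful that $q-h$ can be as large as $\approx 3q/4$ (when $h=q/4$), so the ``second-order term is small'' heuristic needs the factor $q/4\le h$ to control $u=(q-h)/(h(q-1))\le 3/(q-1)$, which is indeed $o(1)$; and one should double-check the boundary $h=q-1$ where $q-h=1$ and the bound degenerates gracefully. Everything else — invoking Lemma~\ref{code}, the normalized-codeword trick for the $O(nm)$ running time — is inherited verbatim from the earlier lemmas.
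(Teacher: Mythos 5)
Your overall framework (reduce to Lemma~\ref{code}, verify $\lceil\log(n+1)/\log q\rceil\le(1-H_q(1-1/h))m$, inherit the $O(nm)$ time) is the same as the paper's, but the entire content of this lemma is the single estimate $1-H_q(1-1/h)\ge \frac{(q-h)^2}{4(q-1)^2h\ln q}$, and that is exactly the step your sketch does not establish. Write the quantity as $\frac{1}{\ln q}\bigl(\frac1h\ln\frac{q}{h}+(1-\frac1h)\ln(1-v)\bigr)$ with $v=\frac{q-h}{h(q-1)}$. The two pieces agree to first order in $q-h$ and nearly cancel, so the target is a genuinely second-order quantity; your accounting that ``the dominant positive contribution is $\frac{q-h}{qh\ln q}$ and the negative corrections are $O\bigl(\frac{(q-h)^2}{q^2h\ln q}\bigr)$'' is false, since the negative piece is first order in $q-h$ (for $h=q-1$ it is of order $1/(q^2\ln q)$, a factor of $q$ larger than $(q-h)^2/(q^2h\ln q)$). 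Moreover the specific elementary bounds you name provably fail: with $\ln(q/h)\ge(q-h)/q$ and $\ln(1-v)\ge-\frac{v}{1-v}$ one computes $(1-\frac1h)\frac{v}{1-v}=\frac{q-h}{qh}$ exactly, so the two bounds cancel and you only get $1-H_q(1-1/h)\ge 0$; with $\ln(1-v)\ge-2v$ instead, the combined lower bound is $\frac{q-h}{h\ln q}\bigl(\frac1q-\frac{2(h-1)}{h(q-1)}\bigr)$, which is negative for every $h\ge2$. ``Keeping constants loose'' cannot rescue this, because the stated $m$ forces the precise constant $4(q-1)^2$.

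The repair requires keeping the second-order term on at least the positive side. Either (i) use $-\ln(1-x)\ge x+\frac{x^2}{2}$ with $x=\frac{q-h}{q}$ together with $\ln(1-v)\ge-\frac{v}{1-v}$; the first-order parts cancel exactly as above and you are left with $\frac{(q-h)^2}{2q^2h\ln q}$, which is at least $\frac{(q-h)^2}{4(q-1)^2h\ln q}$ once $2(q-1)^2\ge q^2$, true since $q\ge 8$ in this regime; or (ii) do what the paper does: recombine the bracket as $-\frac1h\ln\bigl(1-\frac{q-h}{q-1}\bigr)+\ln\bigl(1-\frac{q-h}{h(q-1)}\bigr)$ and expand both logarithms as power series in $u=\frac{q-h}{q-1}$ and $u/h$; the linear terms cancel identically, every remaining term $\frac{u^k}{k}\bigl(\frac1h-\frac1{h^k}\bigr)$, $k\ge2$, is nonnegative, and the quadratic term alone already gives $1-H_q(1-1/h)\ge\frac{(q-h)^2}{2(q-1)^2h\ln q}\bigl(1-\frac1h\bigr)\ge\frac{(q-h)^2}{4(q-1)^2h\ln q}$ for $h\ge2$. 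With that inequality in hand, the rest of your argument goes through as in the paper.
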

\begin{proof} For $\Delta=1-H_q\left(1-\frac{1}{h}\right)$ and using the fact that
$\ln(1-x)=-x-x^2/2-x^3/3-\cdots$ for $|x|<1$, we have
\begin{eqnarray*}
\Delta&=&  \frac{1}{h}-\frac{1}{h}\log_qh- \left(1-\frac{1}{h}\right)\log_q \frac{1-1/q}{1-1/h}\\
&=& \frac{1}{\ln q}\left( \frac{1}{h}\ln\frac{q}{h}-\left(1-\frac{1}{h}\right) \ln \frac{h(q-1)}{q(h-1)}\right)\\
&=& \frac{1}{\ln q}\left( -\frac{1}{h}\ln\left(1-\frac{q-h}{q-1}\right)+\ln \left(1-\frac{q-h}{h(q-1)}\right)\right)\\
&=& \frac{(q-h)^2}{(q-1)^2h\ln q}\left(\frac{1}{2}\left(1-\frac{1}{h}\right) +\frac{q-h}{3(q-1)}\left(1-\frac{1}{h^2}\right)+\cdots\right)\\
&\ge& \frac{(q-h)^2}{(q-1)^2h\ln q}\left(\frac{1}{2}\left(1-\frac{1}{h}\right)\right)\ge \frac{(q-h)^2}{4(q-1)^2h\ln q}.
\end{eqnarray*}
Now
$$\left( 1-H_q\left(1-\frac{1}{h}\right)\right)m\ge \frac{\ln q(n+1)}{\ln q}\ge \left\lceil\frac{\log (n+1)}{\log q}\right\rceil.$$
\end{proof}

\section{Main Results}
In this section we give two main results that will be used throughout the paper

Let $I\subseteq [n]^2$. Define the following homogeneous polynomial
$$H_I=\prod_{(i_1,i_2)\in I}(x_{i_1}- x_{i_2}).$$
We denote by ${\cal H}_d\subseteq \FF_q[x_1,\ldots,x_n]$ the class of all such polynomials of degree at most~$d$. A {\it hitting set}
for ${\cal H}_d$ over $\FF_q$ is a set of assignment $A\subseteq \FF_q^n$ such that for every $H\in {\cal H}_d, H\not\equiv 0$,
there is $a\in A$ where $H(a)\not=0$. A $(1-\epsilon)$-{\it dense hitting set}
for ${\cal H}_d$ over $\FF_q$ is a set of assignment $A\subseteq \FF_q^n$ such that for every $H\in {\cal H}_d$, $H\not\equiv 0$,
$$\Pr_{a\in A}[H(a)\not=0]> 1-\epsilon$$ where the probability is over the choice
of $a$ from the uniform distribution on $A$. When $H(a)\not=0$ then we say that the assignment
$a$ {\it hits} $H$ and $H$ is {\it not zero on} $a$.

We prove
\begin{lemma} \label{hB} Let $n>q,d$. If $q>4(d+1)$ is prime power
then there is a hitting set for ${\cal H}_d$ of size
$$m=\left\lceil \frac{(d+1)\log(q(n+1))}{\log (q/e(d+1))}\right\rceil=O\left(\frac{d\log n}{\log(q/e(d+1))}\right)$$
that can be constructed in time $O(mn)=O(dqn\log(qn))$.

If $d+2\le q\le 4(d+1)$ is prime power then there is a hitting set for ${\cal H}_d$ of size
$$m=\left\lceil \frac{4(q-1)^2(d+1)\ln(q(n+1))}{(q-d-1)^2}\right\rceil=O\left(\frac{dq^2\log n}{(q-d-1)^2}\right)$$
that can be constructed in time $O(mn)=O(d(q^2/(q-d-1)^2)n\log(qn))$.
\end{lemma}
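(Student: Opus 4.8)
The goal is to hit every nonzero polynomial $H_I = \prod_{(i_1,i_2)\in I}(x_{i_1}-x_{i_2})$ of degree at most $d$. The natural idea is to take the columns of a generator matrix of a good linear code and use them as the coordinates of the assignments: given a $[m,k,\delta m]_q$ linear code with generator matrix $G=[v_1|\cdots|v_k]$ (rows indexed by $[m]$), we let the $j$-th assignment $a^{(j)}\in\FF_q^n$ be obtained by assigning to variable $x_i$ the value of the $i$-th ``evaluation''. More precisely, identify each index $i\in[n]$ with a distinct nonzero message $u_i\in\FF_q^k$ (this is possible since $n\le q^k$, using Lemma~\ref{code}), and let the codeword $c_i = G u_i\in\FF_q^m$ be its encoding. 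Then define the $j$-th assignment by $a^{(j)}_i = (c_i)_j$, the $j$-th coordinate of $c_i$. So $A=\{a^{(1)},\ldots,a^{(m)}\}$ has size $m$.

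\textbf{Why this works.} Fix a nonzero $H_I$ with $|I|\le d$. For a pair $(i_1,i_2)\in I$, the factor $x_{i_1}-x_{i_2}$ vanishes on $a^{(j)}$ exactly when $(c_{i_1})_j = (c_{i_2})_j$, i.e. when $j$ is a coordinate on which $c_{i_1}$ and $c_{i_2}$ agree. Since the code has minimum distance $\delta m$ and $c_{i_1}\ne c_{i_2}$ (messages are distinct and $G$ has rank $k$), the two codewords agree on at most $(1-\delta)m$ coordinates. Therefore the number of indices $j$ for which \emph{some} factor of $H_I$ vanishes is at most $|I|(1-\delta)m \le d(1-\delta)m$. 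Hence, as long as $d(1-\delta)m < m$, i.e. $\delta > 1 - 1/d$ — wait, I need $\delta$ slightly larger, namely $(1-\delta) d < 1$ — there is some $j$ with $H_I(a^{(j)})\ne 0$, and $A$ is a hitting set. So I want a code with $1-\delta < 1/(d+1)$ to be safe, i.e. $\delta \ge 1-1/(d+1)$; this matches exactly the parameter $h=d+1$ appearing in Lemmas~\ref{spcode} and~\ref{spcode2}.

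\textbf{Assembling the two cases.} For $q > 4(d+1)$ I apply Lemma~\ref{spcode} with $h=d+1$ (legitimate since $1 < d+1 < q/4$): it gives $n$ nonzero codewords of a $[m,\lceil\log(n+1)/\log q\rceil,(1-1/(d+1))m]_q$ code with $m=\lceil (d+1)\ln(q(n+1))/(\ln q-\ln(d+1)-1)\rceil$, constructible in time $O(nm)$; one checks $\ln q-\ln(d+1)-1 = \ln(q/(e(d+1)))$, giving the stated size $m=O(d\log n/\log(q/(e(d+1))))$ and the stated running time. For $d+2\le q\le 4(d+1)$ I instead invoke Lemma~\ref{spcode2} with $h=d+1$ (now $q/4\le d+1\le q-1$, and the hypothesis $q\ge d+2$ ensures $h\le q-1$ and $q-h\ge 1$), obtaining $m=\lceil 4(q-1)^2(d+1)\ln(q(n+1))/(q-d-1)^2\rceil = O(dq^2\log n/(q-d-1)^2)$, again in time $O(nm)$. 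In both cases the $n$ codewords are nonzero, hence the $n$ corresponding ``message'' vectors $u_i$ may be taken nonzero and distinct — I should be slightly careful that distinct codewords is what I need, which follows because the code is injective on messages, so I take the $u_i$ distinct and the $c_i=Gu_i$ are then automatically distinct; and nonzeroness of codewords, which is what Lemmas~\ref{spcode}/\ref{spcode2} literally provide, is not actually needed for the argument but is harmless. Finally, translating $O(nm)$ into the displayed bounds $O(dqn\log(qn))$ and $O(d(q^2/(q-d-1)^2)n\log(qn))$ is just substitution of the value of $m$.

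\textbf{Main obstacle.} The argument itself is short; the only real points requiring care are (i) verifying that the code parameters supplied by Lemmas~\ref{spcode} and~\ref{spcode2} are exactly what the distance-counting bound $d(1-\delta)<1$ demands — this is why those lemmas were stated with the bound $1-1/h$ for $h=d+1$ — and (ii) the bookkeeping to rewrite $\ln q - \ln h - 1$ as $\ln(q/(e\,h))$ and to confirm the hypotheses $1<h<q/4$ versus $q/4\le h\le q-1$ correspond precisely to the two ranges $q>4(d+1)$ and $d+2\le q\le 4(d+1)$. No genuinely hard step is expected.
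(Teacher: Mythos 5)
Your proposal is correct and follows essentially the same route as the paper: take the $[m,\lceil\log(n+1)/\log q\rceil,(1-\frac{1}{d+1})m]_q$ codes from Lemmas~\ref{spcode} and~\ref{spcode2} with $h=d+1$, transpose the $n$ chosen codewords into $m$ assignments, and note that each factor $x_{i_1}-x_{i_2}$ vanishes on at most $(1-\delta)m=m/(d+1)$ assignments, so $H_I$ survives on at least one. The paper phrases the key step via linearity ($c^{(i_1)}-c^{(i_2)}$ is a nonzero codeword) rather than via agreement of distinct codewords, but for a linear code these are the same observation.
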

\begin{proof} Consider the code $C$
$$\left[m,\left\lceil\frac{\log (n+1)}{\log q}\right\rceil,\left(1-\frac{1}{d+1}\right) m\right]_q$$
constructed in Lemma~\ref{spcode} and Lemma~\ref{spcode2}.
The number of non-zero words in the code is at least $n$. Take any $n$ distinct non-zero words
$c^{(1)},\cdots,c^{(n)}$ in $C$ and define the assignments $a^{(i)}\in \FF_q^n$, $i=1,\ldots,m$ where
$a^{(i)}_j=c^{(j)}_i$.
Let $H_I\in {\cal H}_d, H_I\not\equiv 0$. Then
$$H_I=\prod_{(i_1,i_2)\in I}(x_{i_1}-x_{i_2})\not\equiv 0$$
where $|I|\le d$. For each $t:=x_{i_1}- x_{i_2}$ we have
$(t(a^{(1)}),\ldots,t(a^{(m)}))^T=c^{(i_1)}- c^{(i_2)}\in C$ is a non-zero word in $C$ and therefore
$t$ is zero on at most $m/(d+1)$ assignments. Therefore $H_I$ is
zero on at most $dm/(d+1)<m$ assignment. This implies that there is an assignment in $A$ that hits $H_I$.
\end{proof}
Notice that the size of the hitting set is $mn$ and therefore the
time complexity in the above lemma is linear in the size of the hitting set.

In the same way one can prove
\begin{lemma} \label{hBD} Let $q$ be a prime power.
If $q> 4(d+1)/\epsilon$ be a prime power.
Let $n>q,d$. There is a $(1-\epsilon)$-dense hitting set for ${\cal H}_d$ of size
$$m=\left\lceil \frac{(d+1)\log(q(n+1))}{\epsilon \log (\epsilon q/e(d+1))}\right\rceil=O\left(\frac{d\log n}{\epsilon\log(\epsilon q/e(d+1))}\right)$$
that can be constructed in time $O(dqn\log(qn)/\epsilon)$.

If $(d+1)/\epsilon+1\le q\le 4(d+1)/\epsilon$ be a prime power.
Let $n>q,d$. There is a $(1-\epsilon)$-dense hitting set for ${\cal H}_d$ of size
$$m=\left\lceil \frac{4(q-1)^2(d+1)\ln(q(n+1))}{(q-(d+1)/\epsilon)^2\epsilon}\right\rceil=O\left(\frac{dq^2\log n}{(q-(d+1)/\epsilon)^2\epsilon}\right)$$
that can be constructed in time $O(d(q^2/(q-d-1)^2)n\log(qn)/\epsilon)$.
\end{lemma}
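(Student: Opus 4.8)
\emph{Proof proposal.} The plan is to run the proof of Lemma~\ref{hB} essentially verbatim, with the parameter $h=d+1$ replaced by $h:=(d+1)/\epsilon$. First I would observe that the two hypotheses in the statement are exactly the two regimes of $h$ handled by Lemmas~\ref{spcode} and~\ref{spcode2}: the hypothesis $q>4(d+1)/\epsilon$ is $q>4h$, i.e.\ $1<h<q/4$, which is the regime of Lemma~\ref{spcode}; while $(d+1)/\epsilon+1\le q\le 4(d+1)/\epsilon$ is $q/4\le h\le q-1$ (and $h=(d+1)/\epsilon\ge 2$ since $d\ge 1$ and $\epsilon<1$), which is the regime of Lemma~\ref{spcode2}. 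Substituting $h=(d+1)/\epsilon$ into the formula $m=\lceil h\ln(q(n+1))/(\ln q-\ln h-1)\rceil$ of Lemma~\ref{spcode} and simplifying $\ln q-\ln((d+1)/\epsilon)-1=\ln(\epsilon q/(e(d+1)))$ gives precisely the $m$ claimed in the first case (the $\ln$ versus $\log$ discrepancy is immaterial since the ratio is base-independent); substituting $h=(d+1)/\epsilon$ into the formula of Lemma~\ref{spcode2} gives the $m$ claimed in the second case. In both cases we obtain, in time $O(mn)$, a
$$\left[m,\left\lceil\frac{\log (n+1)}{\log q}\right\rceil,\left(1-\frac{\epsilon}{d+1}\right) m\right]_q$$
linear code $C$ and $n$ distinct nonzero codewords $c^{(1)},\ldots,c^{(n)}$ of $C$ (there are at least $n$ of them because $q^{\lceil\log(n+1)/\log q\rceil}\ge n+1$).

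Next, exactly as in Lemma~\ref{hB}, I would set $a^{(i)}\in\FF_q^n$ by $a^{(i)}_j=c^{(j)}_i$ for $i=1,\ldots,m$ and put $A=\{a^{(1)},\ldots,a^{(m)}\}$. Fix $H_I\in{\cal H}_d$ with $H_I\not\equiv 0$, so $H_I=\prod_{(i_1,i_2)\in I}(x_{i_1}-x_{i_2})$ with $|I|\le d$ and all factors nonzero polynomials. For a single factor $t=x_{i_1}-x_{i_2}$ the vector $(t(a^{(1)}),\ldots,t(a^{(m)}))^T=c^{(i_1)}-c^{(i_2)}$ is a nonzero word of $C$, hence of weight at least $(1-\epsilon/(d+1))m$, so $t$ vanishes on at most $\epsilon m/(d+1)$ of the assignments. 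By the union bound, $H_I$ vanishes on at most $|I|\cdot\epsilon m/(d+1)\le d\epsilon m/(d+1)<\epsilon m$ assignments, and therefore $\Pr_{a\in A}[H_I(a)\neq 0]\ge 1-d\epsilon/(d+1)>1-\epsilon$. This is precisely the defining property of a $(1-\epsilon)$-dense hitting set for ${\cal H}_d$.

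Finally, the hitting set $A$ has size $mn$, the codeword construction runs in time $O(mn)$ by Lemma~\ref{code}, and assembling the assignments costs another $O(mn)$, so the construction is linear in its size; plugging in the two values of $m$ yields (generously) the stated $O(dqn\log(qn)/\epsilon)$ and $O(d(q^2/(q-d-1)^2)n\log(qn)/\epsilon)$ bounds. I do not expect a genuine obstacle here: the density claim is just the minimum-distance union bound already used for Lemma~\ref{hB}, with $d+1$ rescaled by $1/\epsilon$. The only step needing care is the routine algebraic bookkeeping that (i) matches the two $q$-ranges in the statement to the hypotheses of Lemmas~\ref{spcode} and~\ref{spcode2}, and (ii) checks that the choice $h=(d+1)/\epsilon$ reproduces the claimed closed forms for $m$.
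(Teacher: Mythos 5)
Your proposal is correct and is essentially the paper's own argument: the paper proves Lemma~\ref{hBD} only by the remark that it follows ``in the same way'' as Lemma~\ref{hB}, namely by rerunning that proof with the parameter $h=d+1$ replaced by $h=(d+1)/\epsilon$ in Lemmas~\ref{spcode} and~\ref{spcode2}, so that each factor $x_{i_1}-x_{i_2}$ vanishes on at most $\epsilon m/(d+1)$ assignments and $H_I$ on fewer than $\epsilon m$ of them. Your bookkeeping matching the two $q$-ranges to the two code lemmas and recovering the stated values of $m$ is exactly what the paper intends.
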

We note here that such result cannot be achieved when $q<d/\epsilon$~\cite{B14}.

\section{Proof of the Theorems}\labell{Apple}

\subsection{Perfect Hash Family}\labell{PH}\labell{Apple1}

Here we prove
\setcounter{theorem}{0}
\begin{theorem}\label{ThH1} Let $q$ be a power of prime. If $q>4(d(d-1)/2+1)$ then
there is a $(n,q,d)$-PHF
of size
$$O\left(\frac{d^2\log n}{\log(q/e(d(d-1)/2+1))}\right)$$
that can be constructed in linear time.

If $d(d-1)/2+2\le q\le 4(d(d-1)/2+1)$ then
there is a $(n,q,d)$-PHF
of size
$$O\left(\frac{q^2d^2\log n}{(q-d(d-1)/2-1)^2}\right)$$
that can be constructed in linear time.

In particular,
for any constants $c>1$, $\delta>0$ and $0\le \eta<1$, the following $(n,q,d)$-PHF can be constructed in linear time
(the third column in the following table)
{
\begin{center}
\begin{tabular}{|c|l|c|c|c|}
\hline
 &  & Linear time. & Upper & Lower\\
$n$ & $q$ & Size $=O()$ & Bound & Bound \\
\hline\hline
all & $q\ge  d^{2+\delta}$ & $d^2\frac{\log n}{\log q}$ &$d\frac{\log n}{\log q}$& $d\frac{\log n}{\log q}$\\
\hline
all & $ q\ge \frac{c}{2}d^2$ & $d^2{\log n}$ &$d\log n$& $d\frac{\log n}{\log q}$\\
\hline
all & $ q= \frac{d(d-1)}{2}+1+d^{2\eta}$ & $d^{6-4\eta}{\log n}$ &$d\log n$& $d\frac{\log n}{\log q}$\\
\hline
all & $ q= \frac{d(d-1)}{2}+2$ & $d^6\frac{\log n}{\log d}$ &$d\log n$& $d\frac{\log n}{\log q}$\\
\hline
\end{tabular}
\end{center}}
\end{theorem}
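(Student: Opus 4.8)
The plan is to deduce the whole theorem from Lemma~\ref{hB}, after observing that an $(n,q,d)$-PHF is essentially the same object as a hitting set for the polynomial class ${\cal H}_{d(d-1)/2}$. Concretely, I would fix a bijection between $[q]$ and $\FF_q$ and identify a function $h\colon[n]\to[q]$ with the vector $a=(h(1),\ldots,h(n))\in\FF_q^n$. For a $d$-subset $S=\{i_1<\cdots<i_d\}\subseteq[n]$, the restriction $h|_S$ is injective if and only if the coordinates $a_{i_1},\ldots,a_{i_d}$ are pairwise distinct, i.e.\ if and only if the polynomial
$$H_S:=\prod_{1\le j<k\le d}(x_{i_j}-x_{i_k})$$
does not vanish at $a$. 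This $H_S$ is of the form $H_I$ with $I=\{(i_j,i_k):1\le j<k\le d\}$; it has degree $d(d-1)/2$; and since the $d$ variables occurring in it are distinct, it is a product of nonzero linear forms in the domain $\FF_q[x_1,\ldots,x_n]$, so $H_S\not\equiv0$ and hence $H_S\in{\cal H}_{d(d-1)/2}$. Therefore any hitting set $A\subseteq\FF_q^n$ for ${\cal H}_{d(d-1)/2}$, read back as a family of $|A|$ functions $[n]\to[q]$, is an $(n,q,d)$-PHF.

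With the reduction in hand, I would apply Lemma~\ref{hB} with its parameter ``$d$'' instantiated as $D:=d(d-1)/2$ (its hypothesis $n>q,D$ is harmless: for $n\le q$ the identity is already a PHF, and once $q\ge D+2$ one has $n>q>D$). In the range $q>4(D+1)$ this produces a PHF of size $O(D\log n/\log(q/e(D+1)))=O(d^2\log n/\log(q/e(d(d-1)/2+1)))$, and in the range $D+2\le q\le4(D+1)$ a PHF of size $O(Dq^2\log n/(q-D-1)^2)=O(q^2d^2\log n/(q-d(d-1)/2-1)^2)$, each constructible in time $O(mn)$, i.e.\ linear in the number of symbols of the output. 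This is exactly the first (non-table) part of the theorem.

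The four table rows I would then obtain by plugging in the indicated $q$ and simplifying. For $q\ge d^{2+\delta}$: from $D+1\le d^2\le q^{2/(2+\delta)}$ one gets $q/e(D+1)\ge q^{\delta/(2+\delta)}/e$, so $\log(q/e(D+1))=\Theta(\log q)$ and the first bound is $O(d^2\log n/\log q)$. For $q\ge\frac c2 d^2$ ($c>1$): if $q>4(D+1)$ then $q/e(D+1)>4/e$, so the first bound is $O(d^2\log n)$; if $q\le4(D+1)$ then $q=\Theta(d^2)$ and $q-D-1\ge\frac c2 d^2-\frac{d(d-1)}2-1=\Omega(d^2)$, so the second bound is $O(d^2\log n)$. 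For $q=D+1+d^{2\eta}$ ($0\le\eta<1$): here $q=\Theta(d^2)$ and $q-D-1=d^{2\eta}$, so the second bound is $O(d^2\cdot d^4\cdot d^{-4\eta}\log n)=O(d^{6-4\eta}\log n)$, with $D+2\le q\le4(D+1)$ holding for $d$ large.

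The last row, $q=d(d-1)/2+2$, is where I expect the real work to be, and it is the main obstacle. Substituting directly one has $q-D-1=1$, and the second bound gives only $O(d^6\log n)$, a factor $\log d$ worse than the claimed $O(d^6\log n/\log d)$; the underlying Gilbert--Varshamov code has rate $1-H_q(1-1/(D+1))=\Theta(1/(d^6\log d))$ for this $q$, which is the source of the extra logarithm. Shaving it apparently needs a construction more refined than a single such code — the obvious candidates, a Reed--Solomon-based concatenation or a preliminary alphabet-reducing PHF, collide respectively with the Plotkin/Singleton bound (forcing a tiny outer rate) or with the multiplicativity of PHF composition — so this is the step I would spend the most effort on, and it is the only one not immediate from Steps~1--2.
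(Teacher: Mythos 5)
Your proposal is exactly the paper's proof of the first two (non-table) claims: the paper introduces the polynomials $\Delta_{\{i_1,\ldots,i_d\}}=\prod_{1\le k<j\le d}(x_{i_k}-x_{i_j})$, observes that a hitting set for this family is an $(n,q,d)$-PHF, and invokes Lemma~\ref{hB} — nothing more. (The paper writes $\cF\subseteq{\cal H}_{d(d-1)/2+1}$, a harmless off-by-one in the degree; for the last table row one must in fact take the degree to be $d(d-1)/2$, as you do, since $q=d(d-1)/2+2$ does not meet the lemma's hypothesis for degree $d(d-1)/2+1$.) Your verifications of rows 1--3 are the routine substitutions the paper leaves implicit, and they are correct.

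On the last row your suspicion is justified, and you should not expect to close that gap: the paper offers no additional argument for it, and its own general bound $O\left(q^2d^2\log n/(q-d(d-1)/2-1)^2\right)$ evaluated at $q=d(d-1)/2+2$ gives only $O(d^6\log n)$. Nor can the missing $\log d$ be recovered by a sharper rate estimate inside Lemma~\ref{spcode2}: for $q=D+2$ and $h=D+1$ (with $D=d(d-1)/2$) one has $1-H_q(1-1/h)=\Theta\left(1/(D^3\log D)\right)$ while the required dimension is $k=\Theta(\log n/\log D)$, so the $\log D$ factors cancel and any code meeting the Gilbert--Varshamov-type guarantee forces $m=\Theta(D^3\log n)=\Theta(d^6\log n)$. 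The $d^6\log n/\log d$ entry appears to be carried over from the earlier table of constructions from \cite{B14} (where it holds for infinite sequences of $n$ and a different construction) rather than derived from this theorem's method; so, apart from flagging that row — which is a genuine inconsistency between the table and the theorem's own bounds rather than a defect of your argument — your proposal is complete and takes the same route as the paper.
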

\begin{proof} Consider the set of functions
$${\cal F}=\{\Delta_{\{i_1,\ldots,i_d\}}(x_{1},\ldots,x_{n})\ |\ 1\le
i_1<\cdots<i_d\le n\}$$ in $\FF_q[x_1,x_2,\ldots,x_n]$ where
$$\Delta_{\{i_1,\ldots,i_d\}}(x_1,\ldots,x_n)=
\prod_{1\le k<j\le d}(x_{i_k}-x_{i_j}).$$ It is clear that a hitting set for $\cF$
is $(n,q,d)$-PHF.
Now since $\cF\subseteq {\cal H}_{d(d-1)/2+1}$ the result follows from
Lemma~\ref{hB}.
\end{proof}

When $q>d(d-1)/2$ is not a power of prime number then we can take
the nearest prime $q'<q$ and construct an $(n,q',d)$-PHF that is also $(n,q,d)$-PHF. It is known that
the nearest prime $q'\ge q-\Theta(q^{.525})$,~\cite{BHP01}, and
therefore the result in the above table is also true for any
integer $q\ge d(d+1)/2+O(d^{1.05})$.

\subsection{Perfect Hash Family for Small $d$}\label{The}
We now prove
\begin{theorem} If $q$ is prime power and $d\le \log n/(8\log\log n)$ then there is a linear time
construction of $(n,q,d)$-PHF of size
$$O\left(\frac{d^3 \log n}{g(q,d)}\right)$$
where
$$g(q,d)=\left(1-\frac{1}{q}\right)\left(1-\frac{2}{q}\right)\cdots \left(1-\frac{d-1}{q}\right).$$
\end{theorem}
\begin{proof} If $q>d^2$ then the construction in Theorem~\ref{ThH1}
has the required size.
Let $q\le d^2$. We first use Theorem~\ref{ThH1}
to construct an $(n,d^3,d)$-PHF $H_1$ of size
$O(d^2\log n/\log d)$ in linear time. Then a $(d^3,q,d)$-PHF $H_2$ of size $O(d\log d/g(q,d))$ can be constructed in time, \cite{NSS95,AMS06},
$${d^3\choose d}q^{1+\lceil \log d^3/\log q\rceil(d-1)}\le d^{3d}q^dd^{3d}\le d^{8d}<n.$$
Then $H=\{h_2(h_1)\ |\ h_2\in H_2,h_1\in H_1\}$
is $(n,q,d)$-PHF of the required size.
\end{proof}

We now show that this bound is within a factor of $d^4$ of the lower bound when
$q=d+O(1)$ and within a factor of $d^3\log d$ of the lower bound when $q>cd$ for some constant $c>1$.

\begin{lemma} \cite{FK84} Let $n>d^{2+\epsilon}$ for some constant $\epsilon>0$.
Any $(n,q,d)$-PHF is of size at least
$$\Omega\left(\frac{(q-d+1)}{q\log(q-d+2)} \frac{\log n}{g(q,d)}\right).$$

In particular, for $q=d+O(1)$ the bound is
$$\Omega\left(\frac{\log n}{d g(q,d)}\right)$$
and for $q>cd$ for some constant $c>1$ the bound is
$$\Omega\left(\frac{\log n}{(\log d) g(q,d)}\right).$$
\end{lemma}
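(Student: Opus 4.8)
The plan is to follow the Fredman--Komlós double--counting argument of \cite{FK84}, whose point is to improve the elementary ``union bound over all $d$--subsets'' estimate by a factor of $\log n$: a reduction to pair separation over a shrunken alphabet, an averaging step powered by Maclaurin's inequality, and then the two displayed corollaries by substitution.

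First, the reduction. Fix a $(d-2)$--subset $V\subseteq[n]$ and set $H_V=\{h\in H:\ h|_V\text{ injective}\}$. For $h\in H_V$ pick any map $\phi_h\colon h^{-1}(h(V))\to[q]\setminus h(V)$ and put $c_h(x)=h(x)$ if $h(x)\notin h(V)$ and $c_h(x)=\phi_h(x)$ otherwise, so $c_h$ takes values in a set of size $q-d+2$. The map $x\mapsto(c_h(x))_{h\in H_V}$ is injective on $[n]\setminus V$: for $x\neq y$ there, the $d$--set $V\cup\{x,y\}$ is handled by some $h^\ast\in H$, which is then injective on $V$ (so $h^\ast\in H_V$) with $h^\ast(x),h^\ast(y)\notin h^\ast(V)$, hence $c_{h^\ast}(x)=h^\ast(x)\neq h^\ast(y)=c_{h^\ast}(y)$. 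Counting codewords (or a one--line entropy estimate) gives $|H_V|\ge\log(n-d+2)/\log(q-d+2)=(1-o(1))\log n/\log(q-d+2)$ for $n$ large, and this holds for \emph{every} $(d-2)$--subset $V$.

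Second, averaging. For a uniformly random $(d-2)$--subset $V$, on one hand $\E_V|H_V|\ge(1-o(1))\log n/\log(q-d+2)$, and on the other hand $\E_V|H_V|=\sum_{h\in H}\Pr_V[h|_V\text{ injective}]=\sum_{h\in H}e_{d-2}(n_1^{(h)},\dots,n_q^{(h)})/\binom{n}{d-2}$, where the $n_j^{(h)}$ are the colour--class sizes of $h$. By Maclaurin's inequality $e_{d-2}(n_1^{(h)},\dots,n_q^{(h)})\le\binom{q}{d-2}(n/q)^{d-2}$, and dividing by $\binom{n}{d-2}$ — this is where the hypothesis $n>d^{2+\epsilon}$ enters, to make the correction factor $(1-1/n)\cdots(1-(d-3)/n)$ negligible — gives $\Pr_V[h|_V\text{ injective}]\le(1+o(1))g(q,d-2)$. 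Combining, $N=|H|\ge(1-o(1))\bigl(g(q,d-2)\log(q-d+2)\bigr)^{-1}\log n$, which, after rewriting $g(q,d-2)$ in terms of $g(q,d)$, is a bound of the stated shape $\Omega\!\left(\tfrac{q-d+1}{q\log(q-d+2)}\cdot\tfrac{\log n}{g(q,d)}\right)$ up to a factor $q/(q-d+2)$. The two special cases are then just substitution: for $q=d+O(1)$ both $q-d+1$ and $\log(q-d+2)$ are $O(1)$ and $q=\Theta(d)$, giving $\Omega(\log n/(d\,g(q,d)))$; for $q>cd$ one has $q-d+1=\Theta(q)$ and $\log(q-d+2)=O(\log q)$, giving $\Omega(\log n/(\log d\cdot g(q,d)))$ when $q$ is polynomial in $d$.

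The main obstacle is pinning down the constant exactly: the clean argument above yields $g(q,d-2)$, while the statement has the slightly smaller $g(q,d-1)$, the two differing by the factor $q/(q-d+2)$ — harmless when $q\ge cd$ but of order $d$ in the regime $q=d+O(1)$ of the first corollary. Removing it requires the sharper Fredman--Komlós weighting: instead of a uniformly random $(d-2)$--subset one weights the triples $(\{x,y\},V,h^\ast)$ appearing in the injectivity argument so that each of the $\Theta(\log n)$ separation events is charged against a $d$--set on which its witness $h^\ast$ is injective, not merely against a $(d-2)$--set, which is what upgrades one factor $q/(q-i)$ from $i=d-2$ to $i=d-1$. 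The remaining ingredients — Maclaurin's inequality, the codeword count, and the arithmetic of the corollaries — are routine.
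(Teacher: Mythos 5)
Your core computation is correct, and it is the same elementary double-counting argument that the paper itself does not print: the lemma is attributed to \cite{FK84}, and the proof left commented out in the paper's source takes exactly your route (restrict to a $(d-2)$-set $V$, note that the $h$'s injective on $V$ form a pair-separating family over an alphabet of size $q-d+2$, hence $|H_V|\ge \log(n-d+2)/\log(q-d+2)$, then average over $V$ using $e_{d-2}\le\binom{q}{d-2}(n/q)^{d-2}$, with $n>d^{2+\epsilon}$ making the $\binom{n}{d-2}$ correction negligible). But, as you yourself flag, this only yields
$$|H|\;=\;\Omega\!\left(\frac{\log n}{g(q,d-2)\log(q-d+2)}\right)\;=\;\Omega\!\left(\frac{(q-d+1)(q-d+2)}{q^{2}\log(q-d+2)}\cdot\frac{\log n}{g(q,d)}\right),$$
which is weaker than the stated bound (equivalently, $g(q,d-2)$ where the lemma effectively has $g(q,d-1)$) by the factor $q/(q-d+2)$, and that factor is exactly what the first ``in particular'' needs: at $q=d+O(1)$ your argument gives only $\Omega(\log n/(d^{2}g(q,d)))$, a factor $d$ short of the claimed $\Omega(\log n/(d\,g(q,d)))$. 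The ``sharper Fredman--Koml\'os weighting'' you invoke to close this is the real content of \cite{FK84} and is not supplied. Concretely, the missing ingredient is a Hansel/K\"orner-type covering inequality: for fixed $V$, an $h$ injective on $V$ separates pairs only inside $U_h=\{x\notin V:\ h(x)\notin h(V)\}$, and since the complete graph on $[n]\setminus V$ is covered by complete multipartite graphs with at most $q-d+2$ parts on the vertex sets $U_h$, one gets $\sum_{h\in H_V}|U_h|/(n-d+2)\ \ge\ \log(n-d+2)/\log(q-d+2)$; averaging the weight $|U_h|/(n-d+2)$ (rather than the weight $1$) over a uniform $(d-2)$-set $V$ charges each $h$ with, essentially, the probability of being injective on a random $(d-1)$-set, which is at most $(1+o(1))\,g(q,d-1)$ by the same Maclaurin step. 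That is the upgrade from $d-2$ to $d-1$; without it (or FK's original equivalent) the lemma as stated, and its first corollary, are not established.

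In fairness, the paper does not prove the stated constant either --- it cites \cite{FK84}, and its suppressed sketch stops at the same weaker bound you reach --- so your write-up coincides with everything the paper actually argues, proves the $g(q,d-2)$ form completely, suffices for the second corollary (and your caveat that $\log(q-d+2)=\Theta(\log d)$ presumes $q$ polynomial in $d$ is a correct reading of an implicit assumption there), and is candid about the remaining step. Judged as a self-contained proof of the statement as written, though, the unproved weighted-covering step is a genuine gap.
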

\ignore{\begin{proof} Let $H$ be a $(n,q,d)$-perfect hash family.
Let $1\le i_1<i_2<\cdots<i_{d-1}\le n$ and consider
$H'=\{h\ |\ h(i_1),\ldots,h(i_{d-2})$ are distinct$\}$.
For every $h\in H'$ we define a permutation $g_h:[q]\to [q]$ that
maps $\{h(i_1),\ldots, h(i_{d-2})\}$ to $\{q-d+3,\ldots,q\}$.
It is easy to see that $H'$ is $(n-d+2,q-d+2,2)$-perfect hash family
(over $[n]\backslash \{i_1,\ldots,i_{d-2}\}$) and therefore
$$|H'|\ge t:=\frac{\log (n-d+2)}{\log (q-d+2)}.$$ This shows that
for every $1\le i_1<i_2<\cdots<i_{d-2}\le n$ there is at least~$t$
functions in $H$ that is perfect for $S=\{i_1,\ldots,i_{d-2}\}$.
Since each hash function can be perfect for at most ${q\choose d-2}(n/q)^{d-2}$
and we need to cover each of the ${n\choose d-2}$ subsets, $R\subseteq [n]$, $t$ times, we must have
(for some constant $c$)
\begin{eqnarray*}
|H|&\ge& \frac{{n\choose d-2}t}{{q\choose d-2}(n/q)^{d-2}}\\
&\ge&c \frac{\log n}{g(q,d-2)\log (q-d+2)}\\
&=&\Omega\left(\frac{(q-d+1)(q-d+2)}{q^2\log(q-d+2)} \frac{\log n}{g(q,d)}\right).
\end{eqnarray*}
\end{proof} }

\subsection{Dense Perfect Hash}\labell{PH2}\labell{Apple12}

Using Lemma~\ref{hBD} with the same proof as in Theorem~\ref{ThH1} we get
\begin{theorem} Let $q$ be a power of prime. If $q>4(d(d-1)/2+1)/\epsilon$ then
there is a $(1-\epsilon)$-dense $(n,q,d)$-{\it perfect hash family}
of size
$$O\left(\frac{d^2\log n}{\epsilon\log(\epsilon q/e(d(d-1)/2+1))}\right)$$
that can be constructed in linear time.

If $(d(d-1)/2+1)/\epsilon+1\le q\le 4(d(d-1)/2+1)/\epsilon$ then
there is a $(1-\epsilon)$-dense $(n,q,d)$-PHF
of size
$$O\left(\frac{q^2d\log n}{\epsilon(q-(d(d-1)/2+1)/\epsilon)^2}\right).$$
that can be constructed in linear time.
\end{theorem}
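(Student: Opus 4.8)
The plan is to mimic exactly the proof of Theorem~\ref{ThH1} (the perfect hash family construction), but substituting the dense hitting set of Lemma~\ref{hBD} for the plain hitting set of Lemma~\ref{hB}. First I would recall that a $(1-\epsilon)$-dense $(n,q,d)$-PHF is, by definition, a set $H$ of functions $h:[n]\to[q]$ such that for every $S=\{i_1,\dots,i_d\}$ at least a $(1-\epsilon)$-fraction of the $h\in H$ satisfy $|h(S)|=d$. As in Theorem~\ref{ThH1}, I introduce for each $d$-subset $\{i_1<\cdots<i_d\}$ the polynomial
$$\Delta_{\{i_1,\dots,i_d\}}(x_1,\dots,x_n)=\prod_{1\le k<j\le d}(x_{i_k}-x_{i_j}),$$
which lies in the class $\mathcal H_{d(d-1)/2}$, hence certainly in $\mathcal H_{d(d-1)/2+1}$; and I observe, exactly as before, that an assignment $a\in\FF_q^n$ makes $\Delta_{\{i_1,\dots,i_d\}}(a)\neq 0$ if and only if the coordinates $a_{i_1},\dots,a_{i_d}$ are pairwise distinct, i.e. iff the corresponding $h$ is injective on $S$. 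Therefore a $(1-\epsilon)$-dense hitting set for $\mathcal H_{d(d-1)/2+1}$ is precisely a $(1-\epsilon)$-dense $(n,q,d)$-PHF.

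Second, I apply Lemma~\ref{hBD} with the degree parameter $d$ there replaced by $D:=d(d-1)/2$, so that $D+1=d(d-1)/2+1$. The two regimes of Lemma~\ref{hBD} then read: if $q>4(D+1)/\epsilon=4(d(d-1)/2+1)/\epsilon$ there is a $(1-\epsilon)$-dense hitting set of size
$$O\!\left(\frac{(D+1)\log n}{\epsilon\log(\epsilon q/e(D+1))}\right)=O\!\left(\frac{d^2\log n}{\epsilon\log(\epsilon q/e(d(d-1)/2+1))}\right),$$
and if $(D+1)/\epsilon+1\le q\le 4(D+1)/\epsilon$ there is one of size
$$O\!\left(\frac{(q-1)^2(D+1)\log n}{\epsilon\,(q-(D+1)/\epsilon)^2}\right)=O\!\left(\frac{q^2 d^2\log n}{\epsilon\,(q-(d(d-1)/2+1)/\epsilon)^2}\right),$$
using $D+1=\Theta(d^2)$ and $(q-1)^2=\Theta(q^2)$. (I note a minor discrepancy: the theorem as stated in this subsection writes $q^2d$ rather than $q^2d^2$ in the numerator of the second bound; I would carry through the $d^2$ coming honestly from $D+1=\Theta(d^2)$ and flag the stated exponent, or simply match whichever the authors intend.) In both regimes the construction of Lemma~\ref{hBD} runs in time linear in its output size, and the output size is $m\cdot n$ where $m$ is the number of assignments; since $H$ here has $m$ functions each specified by $n$ values, $|H|$ times $n$ equals the hitting-set size, so the construction is linear time in the sense required.

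Third, I transcribe the hitting set into the hash family: writing $a^{(1)},\dots,a^{(m)}$ for the assignments produced, define $h_t:[n]\to[q]$ by $h_t(i)=a^{(t)}_i$ for $t=1,\dots,m$, and set $H=\{h_1,\dots,h_m\}$. For any $d$-subset $S$, the fraction of $t$ with $h_t|_S$ non-injective equals the fraction of assignments on which $\Delta_S$ vanishes, which is $\le\epsilon$ by the dense-hitting-set property, so $H$ is a $(1-\epsilon)$-dense $(n,q,d)$-PHF of the claimed size, constructed in linear time. The only genuine content beyond bookkeeping is the reduction "dense PHF $=$ dense hitting set for $\mathcal H_{d(d-1)/2+1}$" together with the substitution $d\mapsto d(d-1)/2$ in the size formulas; I do not expect a real obstacle, but the one point demanding care is matching the stated exponent of $d$ in the second, small-$q$ bound and making sure the side conditions $n>q,d$ and $q$ prime power from Lemma~\ref{hBD} are inherited (which they are, since $q$ is assumed a prime power and we only care about $n$ large).
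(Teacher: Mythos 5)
Your proposal is correct and takes exactly the paper's route: the paper's entire proof of this theorem is the one-line remark that it follows from Lemma~\ref{hBD} combined with the same reduction used in Theorem~\ref{ThH1} (dense PHF $=$ dense hitting set for the polynomials $\Delta_S$), which is precisely the argument you spelled out. Your flagged discrepancy is also right: the substitution $D+1=d(d-1)/2+1=\Theta(d^2)$ honestly yields $q^2d^2$ in the small-$q$ bound, matching the version of this theorem stated in the introduction (Theorem~\ref{Den}), so the $q^2d$ in the body statement is a typo.
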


The following two folklore results are proved for completeness
\begin{lemma} Let $q\ge d^{1+c}$ for some constant $c>1$.
Any $(1-\epsilon)$-dense $(n,q,d)$-PHF
is of size at least
$$\Omega\left(\frac{d\log n}{\epsilon \log q}\right).$$
\end{lemma}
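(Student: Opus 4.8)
The plan is to deduce this folklore bound from the known lower bound $\Omega\!\left(d\log n/\log q\right)$ for \emph{ordinary} $(n,q,d)$-PHF. That bound follows from (\ref{black}) with $D_1=d$, since an $(n,q,d)$-PHF of size $M$ is an $(M;n,q,(1,\ldots,1))$-SHF (see also \cite{M82,Bl}); the hypotheses $q\ge d^{1+c}$ and ``$n$ large'' simply keep us in the regime $d\le q$ with $n$ polynomially larger than $q$, where the right-hand side of (\ref{black}) is indeed $\Omega(d\log n/\log q)$. The one new ingredient I would supply is the observation that a $(1-\epsilon)$-dense family still contains ordinary PHFs after one deletes any $\epsilon$-fraction of its hash functions.

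Concretely, let $H$ be a $(1-\epsilon)$-dense $(n,q,d)$-PHF and set $M=|H|$. For a $d$-subset $S\subseteq[n]$ let $B_S=\{h\in H:\ h|_S\text{ is not injective}\}$; density gives $|B_S|\le\epsilon M$. First I would note that \emph{every} $H'\subseteq H$ with $|H'|>\epsilon M$ is an ordinary $(n,q,d)$-PHF: for each $d$-subset $S$ one has $|H'|>\epsilon M\ge|B_S|$, hence $H'\not\subseteq B_S$, i.e. some $h\in H'$ is injective on $S$. Applying this to a subset $H'$ of size exactly $\lfloor\epsilon M\rfloor+1$ (which is $>\epsilon M$, and is $\le M$ once $n$, hence $M$, is large enough — for small $n$ the claimed bound is trivial) and then invoking the ordinary lower bound gives
$$\lfloor\epsilon M\rfloor+1\ \ge\ \Omega\!\left(\frac{d\log n}{\log q}\right),$$
so $\epsilon M=\Omega(d\log n/\log q)$ and therefore $M=\Omega\!\left(\dfrac{d\log n}{\epsilon\log q}\right)$, as claimed.

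I do not anticipate a real obstacle: the only content is the elementary counting step that discarding an $\epsilon$-fraction of columns cannot destroy the separating property of a dense family, and everything else is a citation plus bookkeeping. The single point to be careful about is the degenerate small-$n$ (equivalently small-$M$) regime, where the subset $H'$ of the required size need not exist; but there the target estimate $\Omega(d\log n/(\epsilon\log q))$ is itself trivial, so that case is dispatched immediately.
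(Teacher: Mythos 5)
Your proposal is correct and follows the paper's own argument essentially verbatim: the paper also observes that any subset of a $(1-\epsilon)$-dense $(n,q,d)$-PHF of size $\epsilon|H|+1$ is an ordinary $(n,q,d)$-PHF and then invokes the known $\Omega(d\log n/\log q)$ lower bound for such families. Your write-up merely fills in the routine counting and the degenerate small-$M$ bookkeeping that the paper leaves implicit.
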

\begin{proof} If $H$ is an $(1-\epsilon)$-dense $(n,q,d)$-PHF
then any subset of $H$ of size $\epsilon |H|+1$ is $(n,q,d)$-PHF.
Now the result follows from the lower bound for the size
of $(n,q,d)$-PHF.
\end{proof}

\begin{lemma} Let $q>d^2/2$. When $$\epsilon\le \frac{d(d-1)}{2q}-\frac{d^2(d-1)^2}{8q^2}$$
then no $(1-\epsilon)$-dense $(n,q,d)$-PHF exists.
\end{lemma}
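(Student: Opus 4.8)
The plan is to argue by contradiction, via a double counting (averaging) argument over $d$-subsets, exactly as one proves the classical lower bound for non-dense PHF. Suppose $H$ were a $(1-\epsilon)$-dense $(n,q,d)$-PHF. For $h\in H$ write $n^{(h)}_1,\dots,n^{(h)}_q$ for the sizes of the color classes $h^{-1}(1),\dots,h^{-1}(q)$, so $\sum_j n^{(h)}_j=n$. For a uniformly random $d$-subset $S\subseteq[n]$ the probability that $h|_S$ is injective is exactly $e_d\big(n^{(h)}_1,\dots,n^{(h)}_q\big)/\binom{n}{d}$, where $e_d$ is the $d$-th elementary symmetric polynomial (count the $d$-subsets picking one coordinate from each of $d$ distinct classes). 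On the other hand, density says that for every fixed $S$ at least $(1-\epsilon)|H|$ of the hash functions are injective on $S$; double counting the pairs $(h,S)$ then gives
$$\frac{1}{|H|}\sum_{h\in H}\Pr_{S}\big[h|_S\text{ injective}\big]\ \ge\ 1-\epsilon .$$

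First I would bound each summand uniformly in $h$. By Maclaurin's inequality — equivalently, the Schur-concavity of $e_d$ on the nonnegative orthant — we have $e_d(n_1,\dots,n_q)\le\binom qd\,(n/q)^d$, with equality only for perfectly balanced classes. Hence
$$\Pr_{S}\big[h|_S\text{ injective}\big]\ \le\ \frac{\binom qd\,(n/q)^d}{\binom nd}\ =\ g(q,d)\cdot\prod_{i=1}^{d-1}\frac{1}{1-i/n},$$
with $g(q,d)=\prod_{i=1}^{d-1}(1-i/q)$ as in the earlier theorems. As $n\to\infty$ this upper bound tends to $g(q,d)$, so for $n$ large enough (depending only on $q$ and $d$) it is at most $g(q,d)+\delta$ for any prescribed $\delta>0$. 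Plugging this into the displayed averaged inequality yields $1-\epsilon\le g(q,d)+\delta$, i.e. $\epsilon\ge 1-g(q,d)-\delta$.

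It remains to show $1-g(q,d)$ strictly exceeds the stated threshold; taking $\delta$ small then completes the contradiction. From $\ln(1-i/q)\le -i/q$ we get $g(q,d)\le\exp\!\big(-\sum_{i=1}^{d-1} i/q\big)=\exp\!\big(-d(d-1)/(2q)\big)$, and since $e^{-x}-1+x\ge0$ for all $x\ge0$ the elementary estimate $1-e^{-x}\ge x-x^2/2$ holds for $x\ge0$; applying it with $x=d(d-1)/(2q)$, which is below $1$ precisely because of the hypothesis $q>d^2/2$, gives
$$1-g(q,d)\ \ge\ 1-e^{-d(d-1)/(2q)}\ \ge\ \frac{d(d-1)}{2q}-\frac{d^2(d-1)^2}{8q^2}\ \ge\ \epsilon ,$$
and the first two inequalities are strict for $d\ge2$, so $1-g(q,d)>\epsilon$, contradicting $\epsilon\ge 1-g(q,d)-\delta$ once $\delta<1-g(q,d)-\epsilon$.

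The step I expect to be the real obstacle is the finite-$n$ slack in the second paragraph: the clean bound $\Pr_S[h|_S\text{ injective}]\le g(q,d)$ only holds in the limit, since for balanced $h$ on a finite ground set the injectivity probability is larger than $g(q,d)$ by the factor $\prod_{i=1}^{d-1}(1-i/n)^{-1}$ (indeed, for small $n$ a ``balanced'' family is genuinely $(1-\epsilon)$-dense with $\epsilon$ well below the threshold). So the statement should be read as holding for all sufficiently large $n$, or else the threshold must be relaxed by a lower-order term of size $O(d^2/(qn))$. Everything else — the double counting identity, Maclaurin's inequality, and the two one-variable estimates on $g(q,d)$ — is routine.
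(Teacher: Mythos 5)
Your proof is correct and follows essentially the same route as the paper: bound the number of $d$-sets on which a single hash function can be injective by $\binom{q}{d}(n/q)^d$ (the paper cites \cite{FK84} for this fact, while you rederive it via the elementary symmetric polynomial and Maclaurin's inequality), average over all $\binom{n}{d}$ sets, and finish with $g(q,d)\le e^{-d(d-1)/(2q)}\le 1-\frac{d(d-1)}{2q}+\frac{d^2(d-1)^2}{8q^2}$. The finite-$n$ caveat you flag is equally present in the paper's own argument, which passes to the $n\to\infty$ limit at exactly the same point.
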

\begin{proof} Each hash function $h:[n]\to [q]$ can be perfect
for at most ${q\choose d}(n/q)^d$ sets $S$ of size $d$, \cite{FK84}. There
are exactly ${n\choose d}$ sets and therefore the density cannot be greater
than
$$1-\epsilon \le \frac{{q\choose d}\left(\frac{n}{q}\right)^d}{{n\choose d}}\stackrel{n\to\infty}{\longrightarrow}\left(1-\frac{1}{q}\right)\cdots\left(
1-\frac{d-1}{q}\right)\le e^{-d(d-1)/2q} .$$
Since $$e^{-d(d-1)/2q}\le 1-\frac{d(d-1)}{2q}+\frac{d^2(d-1)^2}{8q^2}.$$ the result follows.
\end{proof}

For the rest of the paper we will only state the results
for the non-dense $d$-restriction problems. The results
for the dense $d$-restrict problems follows immediately from applying
Lemma~\ref{hBD}.

\subsection{Cover-Free Families}\labell{CFF}\labell{Apple3}

We now prove the following
\begin{theorem} \labell{CFFr} Let $q\ge wr+2$ be a prime power.
Let $S\subseteq \FF_q^{n}$ be a hitting set for ${\cal H}_{wr}$. Given
a $(q,(w,r))$-CFF of size $M$ that can be constructed in linear time one can
construct an $(n,(w,r))$-CFF of size $M\cdot |S|$ that can be constructed in linear time.

In particular, there is an
$(w,r)$-CFF of size
$${q\choose w} \cdot |S|$$ that can be constructed in linear time in its size.

In particular, for any constant $c>1$, the following $(w,r)$-CFF can be constructed in linear time in their sizes
\begin{center}
\begin{tabular}{|c|c|c|c|c|}
\hline
 &  & Linear time.  & Upper & Lower \\

$n$ & $w$ & Size=$O(\ )$ & Bound & Bound \\
\hline\hline
all & $O(1)$ & ${r^{w+1}}\log n$ & $r^{w+1}\log n$ & $\frac{r^{w+1}}{\log r}\log n$\\
\hline
all & $o(r)$ & ${(ce)^wr^{w+1}}\log n$ & $\frac{r^{w+1}}{(w/e)^{w-1/2}}\log n$ & $\frac{r^{w+1}}{(w/e)^{w+1}\log r}\log n$\\
\hline
\end{tabular}
\end{center}
\end{theorem}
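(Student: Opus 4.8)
The plan is to establish the composition principle first, then read off the table entries by plugging in good codes from Lemma~\ref{hB} and a small-universe CFF. The key observation is that CFF is a $d$-restriction problem with $d = w+r$: an $(n,(w,r))$-CFF is a set $\cF \subseteq \{0,1\}^n$ such that for all $i_1 < \cdots < i_d$ and all $J \subset [d]$ with $|J| = w$, some $a \in \cF$ has $a_{i_k} = 0$ for $k \notin J$ and $a_{i_j} = 1$ for $j \in J$. So I would proceed by composing an ``outer'' map that separates coordinates over a large alphabet $\FF_q$ with an ``inner'' CFF on the small ground set $[q]$.

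\textbf{Step 1 (composition).} Let $S = \{a^{(1)},\dots,a^{(|S|)}\} \subseteq \FF_q^n$ be a hitting set for $\cH_{wr}$, and let $(\,[q], \cB\,)$ be a $(q,(w,r))$-CFF of size $M$, which I view as functions $g_1,\dots,g_M : [q] \to \{0,1\}$. Define the composed family $\cF = \{\, g_\ell \circ a^{(s)} \;:\; 1 \le \ell \le M,\ 1 \le s \le |S|\,\}$, i.e.\ the coordinate vector $(g_\ell(a^{(s)}_1),\dots,g_\ell(a^{(s)}_n))$. This has size $M \cdot |S|$. To verify it is an $(n,(w,r))$-CFF, fix $i_1 < \cdots < i_d$ (with $d = w+r$) and $J \subset [d]$, $|J| = w$. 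Consider the polynomial $H = \prod_{k < j}(x_{i_k} - x_{i_j}) \in \cH_{wr}$ (it has degree $\binom{d}{2} \le wr$ ... I should double-check: $\binom{w+r}{2} = \binom{w}{2} + \binom{r}{2} + wr \ge wr$, so actually I want $\cH_{wr}$ with a slightly more careful degree count, or I take the product only over pairs $(k,j)$ that lie in different ``blocks'' of the intended partition — taking only the $wr$ cross pairs between $J$ and $[d]\setminus J$ gives exactly degree $wr$). Since $H \not\equiv 0$, some $a^{(s)} \in S$ has $H(a^{(s)}) \ne 0$, meaning the $d$ values $a^{(s)}_{i_1},\dots,a^{(s)}_{i_d}$ — at least the $J$ vs.\ non-$J$ values — are pairwise distinct in $[q]$. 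Now apply the inner CFF to these $d$ distinct points with the same index set $J$: there is $g_\ell$ with $g_\ell(a^{(s)}_{i_k}) = 0$ for $k \notin J$ and $g_\ell(a^{(s)}_{i_j}) = 1$ for $j \in J$. Then $g_\ell \circ a^{(s)} \in \cF$ is the desired block. Linear-time constructibility follows because $S$ is built in time $O(|S| n)$ by Lemma~\ref{hB} and the inner CFF is built in time linear in $M$ (it lives on a universe of size $q$, independent of $n$), so the whole family is built in time $O(M \cdot |S| \cdot n)$, linear in the output size $M \cdot |S| \cdot n$.

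\textbf{Step 2 (the explicit inner CFF).} For the ``in particular'' claim with size $\binom{q}{w}\cdot|S|$, I take the trivial $(q,(w,r))$-CFF: for every $w$-subset $T \subseteq [q]$, include the indicator $\bone_T$. This is a $(q,(w,r))$-CFF of size $\binom{q}{w}$ whenever $q \ge w+r$ (given $d = w+r$ distinct points and $J$, the set $T$ equal to the $J$-points works), and it is trivially constructible in linear time. So $\cF$ has size $\binom{q}{w}\cdot|S|$.

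\textbf{Step 3 (the table).} Now I optimize the choice of prime power $q \ge wr+2$ and plug in the size of $S$ from Lemma~\ref{hB} (with $d := wr$, so $d(d-1)/2+1$ there becomes $wr$ here since I use $\cH_{wr}$ directly — more precisely the relevant class is $\cH_{wr}$ and Lemma~\ref{hB} gives a hitting set for $\cH_d$ of size $O(d\log n / \log(q/e(d+1)))$ for $q > 4(d+1)$). For the first row, $w = O(1)$: pick $q = \Theta(wr) = \Theta(r)$, a prime power in $[wr+2, 2wr]$ (Bertrand). Then $|S| = O\!\big(wr \log n / \log(q/(e wr))\big)$; with $q$ a constant multiple of $wr$ this denominator is $\Theta(1)$, giving $|S| = O(wr\log n) = O(r\log n)$, and $\binom{q}{w} = \binom{\Theta(r)}{w} = O(r^w)$, for a total of $O(r^{w+1}\log n)$. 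For the second row, $w = o(r)$: the same choice $q = \Theta(wr)$ gives $\binom{q}{w} \le (eq/w)^w = (e\cdot\Theta(r))^w = O((ce)^w r^w)$ for the constant $c$, and $|S| = O(wr\log n)$; but $wr \cdot r^w$ already exceeds the claimed $(ce)^w r^{w+1}$ by a factor of $w$ — so I instead take $q$ slightly larger, $q = \Theta(w r)$ still but tracking constants so that $\log(q/(ewr))$ and the $(c/c')^w$ slack absorb the $w$ factor; alternatively absorb the extra $w$ into the $(ce)^w$ term since $w \le (ce)^w$ trivially. That gives the total $O\big((ce)^w r^{w+1}\log n\big)$.

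\textbf{The main obstacle} I anticipate is bookkeeping the constants and the exact degree of the separating polynomial: I need $\cH_{wr}$ (degree exactly $wr$, using only the $w\times r$ cross-pairs of the partition), not the full Vandermonde of degree $\binom{w+r}{2}$, otherwise the hitting set is for the wrong class and the sizes blow up; and I must choose the prime power $q$ in the window $[wr+2, O(wr)]$ carefully enough (via Bertrand's postulate on prime powers) that $\log(q/(e\cdot wr)) = \Theta(1)$, which is exactly what collapses $|S|$ from $O(wr\log n/\log(\cdots))$ down to $O(wr\log n)$ and makes the comparison with the union-bound upper bound $\binom{d}{w}r/(w/e)^{w-1/2}\cdot\log n$ and the $\frac{1}{\log r}$-factor lower bound come out as stated. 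The rest is routine substitution.
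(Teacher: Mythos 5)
Your Steps 1 and 2 are essentially the paper's own proof: compose a hitting set $S$ for ${\cal H}_{wr}$ (hitting the degree-$wr$ cross-pair polynomials $\prod_{k\in J,\,j\notin J}(x_{i_k}-x_{i_j})$, as you correct yourself mid-argument) with an inner $(q,(w,r))$-CFF, and instantiate the inner family by the indicators of all $w$-subsets of $\FF_q$, giving size $\binom{q}{w}\cdot|S|$. One point you half-notice but do not resolve: hitting the cross-pair polynomial only guarantees that the $J$-values are distinct from the non-$J$-values, not that all $d=w+r$ values are distinct, so before invoking the inner CFF you must pad the (possibly fewer than $w$) distinct $J$-values and the (possibly fewer than $r$) distinct other values up to sizes $w$ and $r$, which is possible since $q\ge wr+2\ge w+r$; the paper glosses over this as well, and it is routine.

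The genuine gap is in Step 3, the derivation of the table. You pick a prime power $q\in[wr+2,2wr]$ by Bertrand and plug into the first bound of Lemma~\ref{hB}, $|S|=O\bigl(wr\log n/\log(q/(e(wr+1)))\bigr)$. That bound requires $q>4(wr+1)$; for your $q\le 2wr$ the argument of the logarithm is at most $2/e<1$, so the claim that the denominator is $\Theta(1)$ is false (it is negative) and the formula does not apply. Falling back on the second bound of Lemma~\ref{hB}, $O\bigl(wrq^2\log n/(q-wr-1)^2\bigr)$, does not save the argument either, because a Bertrand prime could land at, say, $wr+3$, making this $\Theta\bigl((wr)^3\log n\bigr)$ instead of $O(wr\log n)$. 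What is needed is either $q>4(wr+1)$ by a constant factor (first regime) or $q=c'wr+o(wr)$ with a constant $c'>1$ bounded away from $1$ (second regime); and for the second row, where the claim must hold for every constant $c>1$, one needs $c'$ arbitrarily close to $1$, which factor-$2$ Bertrand gaps cannot provide --- the paper instead takes $q=c'wr+o(wr)$ via the Baker--Harman--Pintz prime-gap theorem \cite{BHP01} and uses the second regime of Lemma~\ref{hB} to get $|S|=O(wr\log n)$. Your device of absorbing the stray factor $w$ into the $(c/c')^w$ slack is the same as the paper's and is fine. With the choice of $q$ and the regime of Lemma~\ref{hB} corrected as above, your argument coincides with the paper's proof.
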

\begin{proof} Consider the set of non-zero functions $$\cM=\{\Delta_\bfi\ |\ \bfi\in[n]^d,  \ i_1,i_2,\ldots,i_d \mbox{\ are distinct}\}$$ where
$$\Delta_{\bfi}(x_1,\ldots,x_n)=\prod_{1\le k\le
w\mbox{{\small \ and\ }} w<j\le d} (x_{i_k}-x_{i_j}).$$
Then $S$ is a hitting set for ${\cal M}$.

Let $\cF\subseteq \{0,1\}^q$ be a $(q,(w,r))$-CFF of size $M$. Regard each $f\in \cF$ as a function $f:\FF_q\to\{0,1\}$. It is easy to see that
$$\{(f(b_1),f(b_2),\ldots,f(b_n))\ |\ b\in S, f\in\cF\}\subseteq \{0,1\}^n$$ is $(w,r)$-CFF of size $|\cF|\cdot|S|=M\cdot |S|$.

Now for every subset $R\subseteq \FF_q$ define the function $\chi_R:\FF_q\to \{0,1\}$ where for $\beta\in\FF_q$ we have $\chi_R(\beta)=1$ if $\beta\in R$ and $\chi_R(\beta)=0$ otherwise. Then $\{\chi_R\ |\ R\subseteq \FF_q,|R|=w\}\subseteq \{0,1\}^{\FF_q}$ is a $(q,(w,r))$-CFF of size ${q\choose w}$. Therefore
$$C=\{(\chi_R(b_1),\chi_R(b_2),\ldots,\chi_R(b_n))\ |\ b\in S, R\subseteq \FF_q, |R|=w\}$$ is $(w,r)$-CFF of size
$$|C|\le {q\choose w}|S|.$$

Now for the results in the table consider a constant $c>c'>1$
and let $q$ be a power of prime such that $q=c'wr+o(wr)$.
This is possible by \cite{BHP01}. By Lemma~\ref{hB} there is a hitting set $S$
for ${\cal H}_{wr}$ of size $O(wr\log n)$. This gives a $(w,r)$-CFF of size
$$O\left({q\choose w}\cdot wr\log n\right)= O\left(\left(\frac{qe}{w}\right)^w wr\log n\right)
=O\left({(ce)^wr^{w+1}} {\log n}\right)$$
that can be constructed in linear time in its size.
\end{proof}

\subsection{Separating Hash Family}\labell{SHF}\labell{Apple4}

Here we prove the following
\begin{theorem} \labell{SHFres} Let $q'>q>D_2$.
Let $S\subset \FF_{q'}^n$ be a hitting set for ${\cal H}_{D_2}$. Then
$$M(n,q,\{d_1,d_2,\ldots,d_r\})\le M(q',q,\{d_1,d_2,\ldots,d_r\})\cdot |S|.$$

In particular, for any constant $c>1$ and $q>D_2$, the following $(M; n,q$ $,\{d_1,d_2,\ldots,d_r\})$  separating hash family can be constructed in linear time
{
\begin{center}
\begin{tabular}{|c|l|c|c|c|}
\hline
 &  & poly time. & Upper & Lower\\
$n$ & $q$ & Size $=O(\ )$ & Bound & Bound \\
\hline\hline
all & $ q\ge (D_2+1)^{c}$ & $D_2\frac{\log n}{\log q}$ &$D_1\frac{\log n}{\log q}$& $D_1\frac{\log n}{\log q}$\\
\hline
all & $ q\ge {c} (D_2+1)$ & $D_2{\log n}$ &$D_1{\log n}$& $D_1\frac{\log n}{\log q}$\\
\hline
 all & $ q\ge D_2+2$ & $D_2^3{\log n}$ &$D_1\log n$& $D_1\frac{\log n}{\log q}$\\
\hline
\end{tabular}
\end{center}}
and an $(M; n,r,\{d_1,d_2,\ldots,d_r\})$  separating hash family of size
$${{cD_2\choose d_1\ d_2\ \cdots\ d_r} D_2}\log n,$$
can be constructed in time linear in the construction size.
\end{theorem}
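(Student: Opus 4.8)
The plan is to mirror the structure of the cover-free family result (Theorem~\ref{CFFr}), replacing the combinatorial object being composed. The engine is a composition lemma: a hitting set for ${\cal H}_{D_2}$ over a large field $\FF_{q'}$ compresses an SHF on $q'$ points down to an SHF on $n$ points, paying a multiplicative factor of $|S|$ in the size. First I would fix pairwise disjoint subsets $C_1,\ldots,C_r\subseteq[n]$ with $|C_i|=d_i$, and observe that the event ``$f(C_1),\ldots,f(C_r)$ are pairwise disjoint'' fails exactly when $f(a)=f(b)$ for some $a\in C_{i_1}$, $b\in C_{i_2}$ with $i_1\neq i_2$. The number of such ``bad pairs'' $(a,b)$ across all choices of distinct indices is exactly $\sum_{1\le i_1<i_2\le r} d_{i_1}d_{i_2}=D_2$. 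This is the key arithmetic observation that makes $D_2$ the right degree.

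\textbf{Composition step.} Given an $(M;q',q,\{d_1,\ldots,d_r\})$-SHF $\cF$ on the point set $\FF_{q'}$ (regard each $f\in\cF$ as a function $f:\FF_{q'}\to[q]$) and a hitting set $S\subseteq\FF_{q'}^n$ for ${\cal H}_{D_2}$, I would define the SHF on $[n]$ to be
$$\left\{\,(f(b_1),f(b_2),\ldots,f(b_n))\ \middle|\ b\in S,\ f\in\cF\,\right\}\subseteq [q]^n.$$
To verify it separates $C_1,\ldots,C_r$: form the polynomial
$$H=\prod_{\substack{1\le i_1<i_2\le r\\ a\in C_{i_1},\,b\in C_{i_2}}}(x_a-x_b),$$
which has degree $D_2$, so $H\in{\cal H}_{D_2}$, and $H\not\equiv0$ since $n>q'>D_2$ gives enough distinct variables (the factors are genuinely nonzero as $a\ne b$). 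By the hitting-set property there is $b\in S$ with $H(b)\ne0$, i.e.\ the coordinates $\{b_a : a\in C_1\cup\cdots\cup C_r\}$ are pairwise distinct — in particular they form $r$ pairwise disjoint subsets of $\FF_{q'}$ of sizes $d_1,\ldots,d_r$. Applying the SHF property of $\cF$ to those subsets yields an $f\in\cF$ separating them, and then $(f(b_1),\ldots,f(b_n))$ separates $C_1,\ldots,C_r$. This gives the stated bound $M(n,q,\{d_1,\ldots,d_r\})\le M(q',q,\{d_1,\ldots,d_r\})\cdot|S|$, and since both ingredients are built in linear time, so is the composition.

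\textbf{Instantiating the table.} For the last three rows, set $q'=c'(D_2+1)+o(D_2)$ for a prime power $c'$ slightly below the relevant constant (possible by \cite{BHP01}), so that Lemma~\ref{hB} gives a hitting set $S$ for ${\cal H}_{D_2}$ of size $O(D_2\log n)$. Then it remains only to supply a base SHF on $q'=O(D_2)$ points of the appropriate size: for $q\ge(D_2+1)^c$ one uses the known polynomial-time SHF of size $O(D_2\log q'/\log q)=O(D_2)$ on $O(D_2)$ points (built by exhaustive search, which is polynomial since $q'$ is polynomial in $D_2$); for $q\ge c(D_2+1)$ the base family has size $O(D_2)$; and for $q\ge D_2+2$ one can take the trivial $(M;q',q,\{d_1,\ldots,d_r\})$-SHF consisting of all functions $\FF_{q'}\to[q]$ that are injective on each of at most... — more economically, the base family of size $O(D_2^2)$ from the earlier $q\ge D_2+1$ constructions — giving total $O(D_2^3\log n)$. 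For the final $r$-ary statement ($q=r$), use as base SHF the ${cD_2\choose d_1\ d_2\ \cdots\ d_r}$ ``coordinate-coloring'' functions assigning colors to a chosen partition, combined with $|S|=O(D_2\log n)$, yielding size ${cD_2\choose d_1\ d_2\ \cdots\ d_r}D_2\log n$.

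\textbf{The main obstacle} I expect is not the composition lemma — that is a direct transcription of the CFF argument — but the bookkeeping for the three table rows: one must check in each case that the base SHF on $O(D_2)$ points genuinely exists with the claimed size and can be found in time polynomial in $q'$ (hence linear in $n$ overall, since $q'=\mathrm{poly}(D_2)$ and $n$ dominates), and that the prime-power gap from \cite{BHP01} does not perturb the constants $c$. The degree count $D_2=\sum_{i_1<i_2}d_{i_1}d_{i_2}$ must also be tracked carefully so that $\cM\subseteq{\cal H}_{D_2}$ rather than ${\cal H}_{D_2+1}$; the strict inequality $q'>D_2$ (rather than $\ge$) is what guarantees $H\not\equiv0$.
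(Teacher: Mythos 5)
Your composition step is exactly the paper's argument: the paper introduces the polynomials $\Delta_{(C_1,\ldots,C_r)}=\prod_{k<j}\prod_{i_1\in C_k,i_2\in C_j}(x_{i_1}-x_{i_2})$ of degree $D_2$ and then ``proceeds as the proof of Theorem~\ref{CFFr} and~\ref{ThH1}'', which is precisely your evaluation construction; so the inequality $M(n,q,\{d_1,\ldots,d_r\})\le M(q',q,\{d_1,\ldots,d_r\})\cdot|S|$ is established the same way, and your degree count and non-vanishing argument are fine.

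The gap is in the instantiation of the table. You set $q'=c'(D_2+1)+o(D_2)$ for \emph{all} rows and then multiply by a base SHF, but for the first row this cannot give the claimed bound: over a field of size $O(D_2)$ the hitting set of Lemma~\ref{hB} has size $\Theta(D_2\log n)$ at best, so even a base family of size $1$ yields $O(D_2\log n)$, never $O(D_2\frac{\log n}{\log q})$ when $q$ is large (e.g.\ $q=n^{\Theta(1)}$); and your stated base size $O(D_2)$ would in fact give $O(D_2^2\log n)$. The same accounting problem affects the second row: a base of size $O(D_2)$ times $|S|=O(D_2\log n)$ is $O(D_2^2\log n)$, a factor $D_2$ above the claimed $D_2\log n$. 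The intended route for these two rows is the Theorem~\ref{ThH1}-style \emph{direct} argument, with no composition at all: pick a prime power $q''\le q$ with $q''\ge q-O(q^{0.525})$ (\cite{BHP01}), apply Lemma~\ref{hB} for ${\cal H}_{D_2}$ over $\FF_{q''}$, and observe that the hitting set itself, read as functions $[n]\to\FF_{q''}\subseteq[q]$, is already the separating hash family; when $q\ge(D_2+1)^{c}$ the first case of Lemma~\ref{hB} gives size $O(D_2\log n/\log(q''/e(D_2+1)))=O(D_2\frac{\log n}{\log q})$, and when $q\ge c(D_2+1)$ the second case gives $O(D_2\log n)$. Your composition is needed only for the third row and the alphabet-$r$ statement, where a suitable prime power below $q$ need not exist. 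Finally, for the third row you invoke ``the base family of size $O(D_2^2)$ from the earlier $q\ge D_2+1$ constructions'' without justification: the constructions quoted from \cite{B14} give, on $\mathrm{poly}(D_2)$ points, size $O(D_2^3)$ (and only for infinitely many sizes) or $O(D_2^4)$ for all sizes, so this step needs an actual argument (the paper is admittedly terse here as well).
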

\begin{proof} Consider the set of functions
$$\cF=\{\Delta_{(C_1,\ldots,C_r)}(x_{1},\ldots,x_{n})\ |\
C_1,\ldots,C_d \mbox{\ are pairwise disjoint},|C_i|=d_i\}$$ in $\FF_q[x_1,x_2,\ldots,x_n]$ where
$$\Delta_{(C_1,\ldots,C_r)}=
\prod_{1\le k<j\le r}\ \prod_{i_1\in C_k,i_2\in C_j}(x_{i_1}-x_{i_2}).$$
The proof then proceeds as the proof of Theorem~\ref{CFFr} and~\ref{ThH1}.
\end{proof}

\section{Open Problems}
Here we give some open problems
\begin{enumerate}
\item Find a polynomial time almost optimal
(within $poly(d)$) construction of $(n,q,d)$-PHF
for $q=o(d^2)$. Using the techniques in \cite{NSS95} it is easy to
give an almost optimal construction for $(n,q,d)$-PHF
when $q=d^2/c$ for any constant $c>1$. Unfortunately the size of the construction
is within a factor of $d^{O(c)}$ of the lower bound.

\item In this paper we gave
a construction of $(n,(w,r))$-CFF
of size
\begin{eqnarray}\label{Bsh}&&\min((2e)^w r^{w+1},(2e)^r w^{r+1})\log n\nonumber \\
&=&{w+r\choose r} 2^{\min(w\log w,r\log r)(1+o(1))}\log n\end{eqnarray}  that can be
constructed in linear time. Fomin et. al. in \cite{FLS14} gave
a construction of size
\begin{eqnarray}\label{Fom}{w+r\choose r} 2^{O\left(\frac{r+w}{\log\log (r+w)}\right)}\log n\end{eqnarray}
that can be constructed in linear time.
The former bound, (\ref{Bsh}), is better than the latter when $w\ge r\log r\log\log r$ or $r\ge w\log w\log\log w$.
We also note that the former bound, (\ref{Bsh}),
is almost optimal, i.e., $${w+r\choose r}^{1+o(1)}\log n=N^{1+o(1)}\log n,$$ where
$N\log n$ is the optimal size, when
$r=w^{\omega(1)}$ or $r=w^{o(1)}$ and the latter bound, (\ref{Fom}),
is almost optimal when
$$o(w\log\log w\log\log\log w)=r=\omega\left(\frac{w}{\log\log w\log\log\log w}\right).$$
Find a polynomial time almost optimal (within $N^{o(1)}$)
construction for $(w,r)$-CFF when $w=\omega(1)$.

\item A
construction is global explicit if it runs in deterministic polynomial time in the size
of the construction.
A local explicit construction is
a construction where one can find any bit in the construction
in time poly-log in the size of the construction.
The constructions in this paper are linear time global explicit constructions.
It is interesting to find local explicit constructions that are almost optimal.
\end{enumerate}

\end{document}